\newtheorem{theorem}{Theorem}
\newtheorem{corollary}{Corollary}
\theoremstyle{remark}
\newtheorem{remark}{Remark}
 \definecolor{lightblue}{rgb}{0.68, 0.85, 0.9}
 \definecolor{lightgreen}{rgb}{0.73, 0.91, 0.62}
 \definecolor{lightorange}{rgb}{0.98, 0.84, 0.65}
  \newcommand{\mytablestyle}{
    \rowcolors{2}{white}{white}
    \arrayrulecolor{lightblue!50!black}
  }
\def\tsc#1{\csdef{#1}{\textsc{\lowercase{#1}}\xspace}}
\begin{document}
\let\WriteBookmarks\relax
\def\floatpagepagefraction{1}
\def\textpagefraction{.001}
\let\printorcid\relax 
\shorttitle{Applied Mathematical Modelling}
\shortauthors{Jiexi Tang et~al.}

\title [mode = title]{SIS Epidemic Modelling on Homogeneous Networked System: General Recovering Process and Mean-Field Perspective}                      
\tnotemark[1]

\tnotetext[1]{This work is supported by the National Natural Science Foundation of China (NSFC) (GrantNo.~62206230) and the Natural Science Foundation of Chongqing (GrantNo.~CSTB2023NSCQ-20MSX0064).}

\author[1]{Jiexi Tang} 

\author[1]{Yichao Yao} 

\author[1]{Meiling Xie} 

\author[1]{Minyu Feng} 
\cormark[1]

\credit{Conceptualization of this study, Methodology, Software}

\address[1]{College of Artificial Intelligence, Southwest University, Chongqing, 400715, China}

\ead[URL]{myfeng@swu.edu.cn}

\cortext[cor1]{Corresponding author}


  \begin{abstract}
    Although we have made progress in understanding disease spread in complex systems with non-Poissonian activity patterns, current models still fail to capture the full range of recovery time distributions. In this paper, we propose an extension of the classic susceptible-infected-susceptible (SIS) model, called the general recovering process SIS (grp-SIS) model. This model incorporates arbitrary recovery time distributions for infected nodes within the system. We derive the mean-field equations assuming a homogeneous network, provide solutions for specific recovery time distributions, and investigate the probability density function (PDF) for infection times in the system’s steady state. Our findings show that recovery time distributions significantly affect disease dynamics, and we suggest several future research directions, including extending the model to arbitrary infection processes and using the quasistationary method to address deviations in numerical results.
    \end{abstract}

\begin{keywords}
  Homogeneous networks \sep epidemic propagation \sep general recover Time \sep inter-event time \sep grp-SIS model
\end{keywords}

\maketitle

\section{Introduction}
\label{sec:1}

Recent studies on epidemic dynamics in complex networks highlight the role of higher-order interactions, non-pharmaceutical interventions (e.g., quarantine and mobility control), and structural network features \cite{chen2024siqrs,li2024urban,cui2024impact}. While traditional models are useful, they often fail to capture the full complexity of real-world scenarios, especially in social and multiplex networks where classical centrality measures inadequately represent node influence \cite{hajarathaiah2024node,cisneros2021multigroup}. Recent works have examined both stochastic and deterministic epidemic models, optimal resource allocation, and the effects of inter-event time distributions on disease dynamics, yielding novel insights into epidemic spreading and control \cite{de2024certain,jafarizadeh2025optimal,papageorgiou2023stochastic,zhang2021threshold}. These insights extend to digital epidemics, where models help analyze malware propagation in network security \cite{sangala2024analyzing}. Furthermore, researchers have optimized epidemic control strategies by identifying critical nodes in networks to minimize spread \cite{zhao2020minimum}. Moreover, recent studies have introduced coupled epidemic-information propagation models that incorporate individual adaptive behavior and information diffusion, significantly shaping epidemic dynamics \cite{an2024coupled}. Network-based approaches further contribute to understanding epidemic spreading, such as modelling disease transmission across areas using mobility-driven complex networks \cite{li2022network} and analyzing information dynamics in evolving networks under birth-death processes \cite{feng2024information}.

Research into the distribution of inter-event times has unveiled significant insights into the dynamics of complex systems, with power law distributions observed in the recurrence of extreme events, underscoring their irregular yet predictable nature \cite{benson2007recurrence}. This understanding is further enhanced by considering burstiness and network topology in spreading dynamics, as non-stationary patterns significantly impact the propagation of epidemics and information \cite{horvath2014spreading,jo2014analytically}. The critical role of small inter-event times in epidemic spreading on networks has been widely recognized, highlighting the necessity of incorporating these temporal dynamics into modelling \cite{masuda2020small,min2011spreading}. Models that account for the non-Poissonian nature of human activities offer a more precise representation of spreading dynamics \cite{vazquez2006modeling}. Distributed Radio Frequency Identification (RFID) sensor networks enable detailed analyses of person-to-person interactions, further clarifying the dynamics of human contact patterns \cite{cattuto2010dynamics}. Human activity patterns significantly influence information diffusion, underscoring the importance of accounting for non-stationarity and burstiness \cite{iribarren2009impact,aliee2020estimating}. Similarly, \cite{zhu2019nonlinear} analyzed the incorporation of time delay and nonlinear dynamics into epidemic modelling, providing insights into the stability and control of diseases within network-based systems. This understanding is crucial for developing models that accurately reflect the complexities of real-world disease dynamics and information spread.

Previous studies have investigated the spreading of the non-Markovian epidemic, where bursty or heavy-tailed waiting times significantly impact disease propagation. Min et al. \cite{min2011spreading} examined SI dynamics with power law waiting times and observed a slow decay in prevalence, albeit under the assumption of fixed distributions that limit generalizability. Masuda and Holme \cite{masuda2020small} demonstrated that short inter-event times exert a greater influence on epidemic thresholds than long-tailed distributions, although their framework is confined to mixed exponentials. Jo et al. \cite{hathcock2022asymptotic} investigated the effects of burstiness during both early and late stages of spreading, while their analysis predominantly focused on initial dynamics. Cator et al. \cite{cator2013susceptible} introduced the generalized SIS (GSIS) model that allows for arbitrary infection and recovery times and derived mean-field equations similar to NIMFA; however, their study did not address transient dynamics or systematically compare recovery distributions. Starnini et al. \cite{starnini2017equivalence} mapped non-Markovian processes onto equivalent Markovian models using an effective infection rate, thereby simplifying the analysis; however, this approach assumes that memory effects can be fully captured by a single parameter, an assumption that may not hold under strong temporal correlations. Feng et al. \cite{feng2019equivalence} further demonstrated that Markovian approximations break down in the presence of temporal correlations in edge activation, leading to transient and steady-state prediction deviations. Han et al. \cite{han2023non} studied non-Markovian epidemic spreading on temporal networks and found that, under certain conditions, these dynamics can be approximated by mapping them onto Markovian processes on static networks; Zeng et al. \cite{zeng2025power} proposed a complex network model with power-law activating patterns, uncovering stationary distributions and evolutionary dynamics through renewal theory and Markov analysis. However, their focus was primarily on topological effects rather than on the nuances of infection-recovery dynamics.

While these works provide valuable theoretical insights, they either rely on Markovian approximations, emphasize steady-state properties, or neglect transient behavior. In this paper, we introduce the General Recovering Process SIS (grp-SIS) model, a significant advancement of the conventional SIS model. This enhanced model accounts for arbitrary recovery time distributions, providing a more practical framework for simulating disease dynamics within intricate systems. Differing from the classical SIS model, the grp-SIS model does not rely on the assumption of a memoryless recovery process. This departure is necessitated by the often unrealistic nature of such an assumption, considering factors such as disease severity and individual immunity levels present within a system. Tailored for homogeneous network systems, our model delves into the effects of diverse recovery time distributions on the spread of disease throughout the system. Unlike preceding methods that either rely on Markovian approximations or predominantly focus on steady-state behavior, our model directly resolves the complete mean-field equations governing non-Markovian SIS dynamics. This approach allows for an accurate depiction of both transient and steady-state behaviors. Through the integration of varied recovery distributions without resorting to simplifying assumptions, this approach not only enhances prediction accuracy but also ensures computational feasibility. Beyond its theoretical contributions, our model holds substantial practical value in fields such as public health, healthcare resource allocation, vaccine development, and network security, offering a robust analytical tool for understanding and mitigating the spread of infections and threats in various complex systems.

The structure of this paper is as follows: Section~\ref{sec:1} provides a comprehensive overview of the intricate dynamics of disease transmission within complex systems, along with the pivotal influence exerted by inter-event time distributions across these networks. Section~\ref{sec:2} delves into a detailed review of the relevant literature, exploring the foundational models that underpin our understanding. Following this, we introduce the innovative grp-SIS model, specifically designed for complex systems, and derive the mean-field equations governing this intricate system. Subsequently, we conduct an in-depth examination of the infection and recovery dynamics under diverse recovery time distributions within the system. Section~\ref{sec:4} shifts its focus to numerical simulations, comparing the theoretical predictions with empirical observations in the context of system behavior. Lastly, Section~\ref{sec:5} summarizes the key findings and outlines potential avenues for future research in the study of complex systems.


\section{Modelling of the SIS model with a general recovering process}
\label{sec:2}

In the study of disease dynamics within network systems, the classical SIS model has been widely used due to its simplicity and mathematical tractability. However, this model relies on certain assumptions, such as the memoryless recovery process governed by a Poisson process, which may not hold in more complex real-world system scenarios. As real-world data often exhibit non-exponential recovery time distributions, a variety of examples highlight this phenomenon across different system contexts. For instance, in healthcare systems, recovery times for illnesses such as chronic infections or certain viral diseases are often highly variable and do not follow a simple exponential trend; patients may experience prolonged recovery with sporadic relapses. Similarly, in online systems like social media platforms, user activity often follows a bursty pattern, where engagement surges sporadically due to external triggers, resulting in non-uniform return times that align more closely with heavy-tailed distributions. Another example can be observed in natural ecosystems, where populations of certain species recover or replenish at irregular intervals due to fluctuating environmental factors or predator-prey dynamics. These cases illustrate that recovery processes in real-world systems are frequently influenced by complex, context-specific factors, leading to recovery time distributions that deviate significantly from the exponential model. To address this variability, we propose a more flexible and realistic SIS model capable of accommodating arbitrary recovery time distributions within network systems. By allowing for diverse recovery patterns, our model better reflects the stochastic and often unpredictable nature of real-world spreading processes, providing a more accurate framework for studying epidemic dynamics within these systems. This new model, referred to as the grp-SIS model, relaxes the classical assumptions and allows for the investigation of how different recovery time distributions affect disease spread within network systems.

\subsection{Classical SIS model}
\label{sec:2.1}

The classical SIS model has been foundational in the study of disease dynamics on networks. In this model, individuals transition between susceptible and infected states, with the dynamics governed by two key processes: infection and recovery. The mean-field estimation method provides the propagation dynamics equation for the SIS model as follows \cite{kermack1932contributions}:
%
\begin{align}
\label{eq:classic-sis}
\frac{\text{d}\rho _{I}(t)}{\text{d}t}=-\mu \rho _{I}(t)+\beta \langle k
\rangle [1-\rho _{I}(t)]\rho _{I}(t).
\end{align}
Here, $\rho _{I}(t)$ represents the fraction of infected individuals in the population at time $t$. The parameter $\mu $ is the recovery rate, with units of $\text{s}^{-1}$, indicating the probability per unit time that an infected individual recovers and returns to the susceptible state. The parameter $\beta $ is the infection rate, also with units of $\text{s}^{-1}$, representing the probability per unit time that an infected node transmits the disease to a susceptible neighbor. The term $\langle k \rangle $ denotes the average degree of the network, which characterizes the average number of connections per node. This model assumes that the infection process is memoryless, implying that the time until recovery follows an exponential distribution, leading to the Markovian nature of the SIS dynamics. Since Eq. {(\ref{eq:classic-sis})} is a separable differential equation, we can integrate both sides under the given initial conditions $(\rho _{I}(0), \rho _{S}(0)) = (\rho _{I}^{0}, \rho _{S}^{0})$ to obtain the analytical solution:
%
\begin{align}
\label{eq:classic-sis-analytical}
\rho _{I}(t)=
\frac{\beta \langle k\rangle -\mu}{\beta \langle k\rangle +\frac{\beta \langle k\rangle \rho _{S}^{0}-\mu}{\rho _{I}^{0}}\text{e}^{(\beta \langle k\rangle -\mu )t}}
\forall t\geq 0,\tau >\tau _{c},
\end{align}
where $\tau _{c}=\frac{1}{\langle k\rangle}$ is the epidemic threshold, which characterizes the critical point above which the disease can persist in the population.

For a robust assessment of the classical SIS model's applicability, we need to estimate its parameters using real data. We can use recovery waiting time data for each infected individual and infection waiting time data for each susceptible individual, denoted as $\{\mathbf{W}_{i}^{I}, \mathbf{W}_{j}^{S}\}^{n, m}_{i=1,j=1}$. Given our assumptions, these two times should follow exponential distributions with parameters $\mu $ and $\beta \langle k\rangle $. Maximum likelihood estimation can be applied to estimate these parameters as follows:
%
\begin{align}
\begin{cases}
\displaystyle \max _{\mu}\left \{\prod _{i}\mu \exp (-\mu \mathbf{W}_{i}^{I})
\right \}
\\\noalign{\vspace{3pt}}
\displaystyle \max _{\beta}\left \{\prod _{j}\beta \langle k\rangle
\exp (-\beta \langle k\rangle \mathbf{W}_{j}^{S})\right \}
\end{cases}.
\label{eq3}
\end{align}
These yield the estimates:
%
\begin{align}
\begin{cases}
\displaystyle \hat{\mu}=\frac{1}{n}\sum _{i}\mathbf{W}_{i}^{I}=:
\overline{\mathbf{W}^{I}}
\\\noalign{\vspace{3pt}}
\displaystyle \hat{\beta}\langle k\rangle =\frac{1}{m}\sum _{j}
\mathbf{W}_{j}^{S}=:\overline{\mathbf{W}^{S}}
\end{cases},
\label{eq4}
\end{align}
where $\overline{\mathbf{W}^{I}}$ and $\overline{\mathbf{W}^{S}}$ represent the sample means of the recovery and infection waiting times, respectively. Through this process, we obtain the maximum likelihood estimates of the exponential distributions for the two waiting times: $\hat{f}_{W^{I}}(t)$ for recovery waiting time and $\hat{f}_{W^{S}}(t)$ for infection waiting time.

Next, we divide the data into $N$ equal-sized intervals $\{E^{I}_{i},E^{S}_{i}\}_{i=1}^{N}$ based on the data range, resulting in $N$ datasets $\{\mathscr{W}_{i}^{I},\mathscr{W}_{i}^{S}\}_{i=1}^{N}$. Here, $\mathscr{W}_{i}^{I}$ and $\mathscr{W}_{i}^{S}$ denote the subsets of waiting time data that fall within each interval $E^{I}_{i}$ and $E^{S}_{i}$, respectively. To estimate the probability density function (PDF) directly, we employ the Kernel Density Estimation (KDE) method:
%
\begin{align}
\begin{cases}
f_{{W}^{I}}^{*}(t) = \dfrac{|\mathscr{W}^{I}_{i}|}{|E^{I}_{i}|n},&
\forall t\in E^{I}_{i}
\\\noalign{\vspace{3pt}}
f_{{W}^{S}}^{*}(t) = \dfrac{|\mathscr{W}^{S}_{i}|}{|E^{S}_{i}|m},&
\forall t\in E^{S}_{i}
\end{cases}.
\label{eq5}
\end{align}
Finally, we compute the Kullback-Leibler (KL) divergence between the estimated distributions and the expected exponential distribution with the corresponding parameters. KL divergence is a widely used measure for quantifying the difference between two probability distributions, making it an appropriate tool for assessing how well the observed waiting time distributions align with the assumed exponential model. The KL divergence between the estimated PDF $f^{*}(t)$ and the theoretical exponential PDF $\hat{f}(t)$ is given by:
%
\begin{align}
D_{\text{KL}}(f^{*} || \hat{f}) = \int _{0}^{\infty }f^{*}(t) \ln
\frac{f^{*}(t)}{\hat{f}(t)} \text{d}t.
\label{eq6}
\end{align}
By evaluating this divergence, we can determine the extent to which the empirical data deviates from the exponential assumption, providing insights into the validity of the classical SIS model.

\subsection{Assumptions of the grp-SIS model}
\label{sec:2.2}

When the KL divergence is large, the classical SIS model fails to accurately describe systemic disease transmission behavior within complex systems. A large KL divergence indicates that our data is not exponentially distributed, which conflicts with the assumptions of the classical SIS model. The assumption that the remaining recovering time $W-T$ of a patient in the SIS model is independent of the previous infection time $T$ is unreasonable within the context of a system. In real life, we can roughly determine the recovery time interval by experience through the recovery momentum of a patient.

Because of the analysis above, we propose an SIS model with a general recovery waiting time distribution (hereinafter referred to as the grp-SIS model), which is based on the following assumptions:
\begin{enumerate}[3)]
\item[1)] The infection process of each infected node for the susceptible neighbor node is a Poisson process with intensity $\beta $. Here, $\beta $ has units of $\text{s}^{-1}$, representing the probability per unit time that an infected node transmits the disease.
\item[2)] The recovering process of the infected node itself is a counting process derived from several distribution $N(t)$, but with limitations: $\forall t\geq 0$,
%
\begin{align}
P\{N(t+\text dt)\geq 1,N(t)=0\}=\mathcal{O}(\text dt).
\label{eq7}
\end{align}
\item[3)] The recovery waiting time for a node is a random variable $W$, assuming its PDF is $w(t)$. Here are the requirements:
%
\begin{align}
w(t)\in \mathcal{R}(\mathbb R);\forall t<0, w(t)=0.
\label{eq8}
\end{align}
\item[4)] Each recovery waiting time is independent of all previous recovery/infection waiting times. In other words, the current infection status of an individual is not significantly related to the number of times they have been previously infected.
\end{enumerate}

Then, we can express this {grp-SIS model ({Fig.~\ref{fig:grp-SIS_sketch_map}})} as follows:
%
\begin{align}
\begin{cases}
I+S\mathop{\rightarrow}\limits ^{\beta}2I
\\\noalign{\vspace{3pt}}
I\mathop{\rightarrow}\limits ^{w(t)}S
\end{cases}.
\label{eq9}
\end{align}
\begin{figure}
    \centering
    \subfigure[Infection process]{
        \includegraphics[width=0.7\columnwidth]{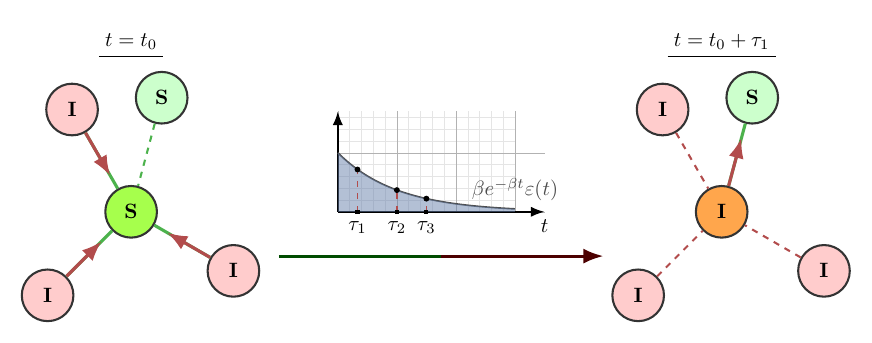}
    }
    \subfigure[Recovery process]{
        \includegraphics[width=0.7\columnwidth]{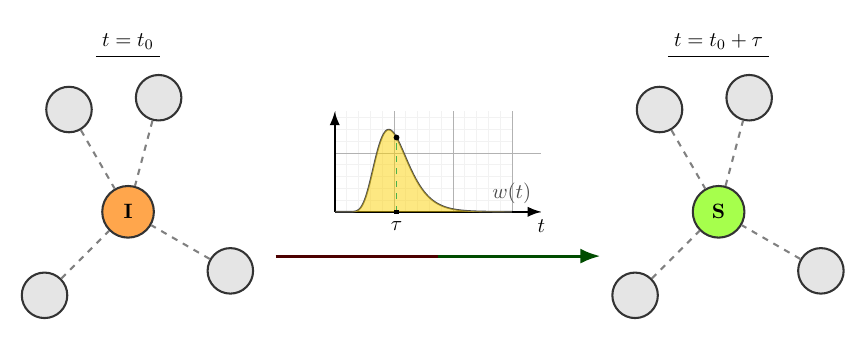}
    }
    \captionof{figure}{
        \textbf{Schematic diagram of the grp-SIS model.} 
        Red nodes represent infected nodes, green nodes indicate susceptible nodes, 
        and gray nodes denote nodes whose states are arbitrary. The focal nodes for each process 
        are highlighted with a darker color. Dashed edges represent connections where the nodes 
        at both ends do not influence the ongoing process of the focal node, whereas solid 
        directed edges from red to green nodes indicate the potential transmission of infection along that link.  
        The diagram (a) illustrates the infection process of a susceptible node in 
        the grp-SIS model. Initially, at $t = t_0$, a susceptible node (S) is 
        surrounded by both infected (I) and susceptible (S) nodes, with directed 
        edges indicating potential transmission from the infected nodes. The middle 
        section depicts the probability density function (PDF) of the infection waiting 
        time, with labeled points $\tau_1$, $\tau_2$, and $\tau_3$ representing 
        individual times required for transmission attempts from different infected 
        neighbors. At $t = t_0 + \tau_1$, the susceptible node becomes infected, 
        altering the network's state and the potential interactions.
        The diagram (b) shows the recovery process of an infected node in the grp-SIS 
        model. At $t = t_0$, an infected node (I) is connected to surrounding nodes 
        with no significant interaction for recovery. The middle section includes a 
        PDF representing the distribution of recovering waiting times, with the 
        labeled point $\tau$ denoting the specific recovery time for the infected 
        node. At $t = t_0 + \tau$, the node transitions to a susceptible state, 
        depicted by a change in color, indicating recovery and potential reinfection 
        by neighboring nodes.
    } 
    
    \label{fig:grp-SIS_sketch_map}
\end{figure}

Using homogeneous networks in the study of complex network spreading systems offers key advantages. They simplify the analysis and modelling of these systems, allowing for easier derivation and clearer theoretical results, such as threshold conditions and steady-state solutions. Homogeneous networks provide a standardized foundation for validating complex models within spreading systems and facilitate better control and prediction of spreading processes. They are also ideal for initial research to explore fundamental mechanisms before applying findings to more heterogeneous networks \cite{daley1999epidemic}. Therefore, in this study, we use homogeneous networks for modelling spreading systems.

\subsection{Survival probability in the recovery process}
\label{sec:2.3}

To derive the dynamic equations of the grp-SIS model, we first describe the counting process derived from the general waiting time. In this context, $ F_{0}(t) $ represents the probability that no event (i.e., no recovery event) occurs up to time $ t $; in other words, it is the probability that an individual remains infected until at least time $ t $.

Assume that the waiting time $ W $ between adjacent events in the counting process $ N(t) $ has a probability density function $ w(t) $, satisfying $ w(t)=0 $ for all $ t<0 $. Then, we have
%
\begin{align}
F_{0}(t) = P\{N(t)=0\} = P\{W>t\} = \int _{t}^{\infty }w(\tau ) \, {
\mathrm{d}}\tau .
\label{eq10}
\end{align}

Since, for disease transmission, we only care about whether an individual is infected at a given time (i.e., whether at least one recovery event has not occurred), the quantity $ F_{0}(t) $ plays a crucial role in our subsequent analysis. Empirical studies \cite{gandica2017stationarity,barabasi2005origin,vazquez2007impact,jo2012optimized,oliveira2005darwin,gonccalves2008human,candia2008uncovering,radicchi2009human,chaintreau2007impact} have shown that in various real-world scenarios, the inter-event time distribution exhibits heavy-tailed characteristics, often following a power law distribution. Such distributions arise in human activity patterns, social interactions, and biological processes, making them highly relevant for modelling disease recovery dynamics. Additionally, the lognormal distribution is frequently employed to describe inter-event time distributions, as it effectively captures variability in recovery times due to heterogeneous factors such as individual immunity, treatment effects, and disease severity \cite{crow1987lognormal}. \cite{hathcock2022asymptotic} further highlights other possibilities, suggesting that the choice of recovery time distribution significantly impacts epidemic dynamics. Therefore, in this study, we also use power law and lognormal distributions as examples. Below, we present $ F_{0}(t) $ for these classic recovery time distributions.
\begin{enumerate}[3)]
\item[1)] Exponential distribution:
%
\begin{align}
w(t)=\mu {\mathrm{{e}}}^{-\mu t}\varepsilon (t)
\label{eq11}
\end{align}
and
%
\begin{align}
F_{0}(t)={\mathrm{{e}}}^{-\mu t}\varepsilon (t)
\label{eq12}
\end{align}
where $\mu >0$.
\item[2)] Power law distribution:
%
\begin{align}
w(t)=\displaystyle \frac{\lambda -1}{t_{0}}\left (\frac{t}{t_{0}}
\right )^{-\lambda}\varepsilon (t-t_{0})
\label{eq13}
\end{align}
and
%
\begin{align}
F_{0}(t)=\displaystyle \left (\frac{t}{t_{0}}\right )^{1-\lambda}
\varepsilon (t-t_{0})
\label{eq14}
\end{align}
where $\lambda >2,t_{0}>0$.
\item[3)] Lognormal distribution:
%
\begin{align}
w(t)=
\begin{cases}
\frac{1}{t\sqrt{2\pi}\sigma}\exp \left ({-
\frac{(\ln t - \mu )^{2}}{2\sigma ^{2}}}\right ),&t>0
\\\noalign{\vspace{3pt}}
0,&t\leq 0
\end{cases}
\label{eq15}
\end{align}
and
%
\begin{align}
F_{0}(t)=
\begin{cases}
\frac{1}{2}\text{erfc}\left (\frac{\ln t -\mu}{\sqrt 2\sigma}\right ),&t>0
\\\noalign{\vspace{3pt}}
0,&t\leq 0
\end{cases}
\label{eq16}
\end{align}
where $\text{erfc}(x)$ is the co-error {function,}
\end{enumerate}
where the function $\varepsilon (t)$ is the unit step function:
%
\begin{align}
\varepsilon (t)=
\begin{cases}
1,&t\geq 0
\\\noalign{\vspace{3pt}}
0,&t<0
\end{cases}.
\label{eq17}
\end{align}
%

\subsection{Dynamic equations of the grp-SIS model on homogeneous networks}
\label{sec:2.4}

In this section, we derive the homogeneous mean-field equations of the grp-SIS model, assuming the disease spreads on a homogeneous network with an average degree $\langle k\rangle $ within a complex system.

First, we consider the state variables for establishing this model within the system. Since the general distribution no longer possesses the memoryless property like the exponential distribution, we need to consider the state information of each infected node at every time point before the current time within the system. According to assumption $4$ of the grp-SIS model, we only need to trace back to the most recent infection of the infected node within the system, which means we only need to consider the infection duration $T(t)$ of the infected node. Therefore, we define:
%
\begin{align}
\rho _{I}(t;\tau ):=P\{I,T(t)\leq \tau \}.
\label{eq18}
\end{align}
The probability that a node has been infected for no more than $\tau $ at time $t$ within the system. For the state variable of susceptible individuals, since we still assume the infection process is a Poisson process, we use the symbol $\rho _{S}(t)$ to denote the susceptible density at time $t$ within the system. As for the infection density at time $t$, we have the following relationship:
%
\begin{align}
\rho _{I}(t)=\lim _{\tau \rightarrow \infty}\rho _{I}(t;\tau ),
\forall t\geq 0.
\label{eq19}
\end{align}
Concerning $\rho _{I}(t; \tau )$, the properties are as follows:
%
\begin{align}
\label{eq:pItt}
\begin{cases}
\displaystyle \rho _{I}(t;\tau )=\rho _{I}(t)\int _{0}^{\tau }f_{T(t)}(
\tau ')\text{d}\tau ', & \forall t\geq 0
\\\noalign{\vspace{3pt}}
\displaystyle \rho _{I}(t;\tau )=0,&\forall \tau \leq 0
\end{cases},
\end{align}
where the function $f_{T(t)}(\tau )$ is the one-dimensional probability density function of the stochastic process $T(t)$ within the system.

For the homogeneous mean-field equations of the grp-SIS model within the system, we have the following theorem:

\begin{theorem}
\label{thm1}
Let $\rho _{S}(t)$ and $\rho _{I}(t)$ denote the susceptible and infected densities at time $t$, respectively, with the constraint $\rho _{S}(t) + \rho _{I}(t) = 1$. For the infected nodes, let $w(t)$ be the probability density function of the recovery waiting time $W^{I}$, satisfying the grp-SIS model conditions. The probability of a node not recovering after time $\tau $ is given by $F_{0}(\tau )$.

Then, the mean-field equations for the grp-SIS spread on a homogeneous network with an average degree $\langle k\rangle $ are:
%
\begin{align}
\label{eq:grp-sis}
\frac{\partial \rho _{I}(t;\tau )}{\partial t}+
\frac{\partial \rho _{I}(t;\tau )}{\partial \tau}=\beta \langle k
\rangle \rho _{S}(t)\rho _{I}(t)\varepsilon (\tau )-\int _{0}^{\tau}\frac{w(\tau ')}{F_{0}(\tau ')}
\frac{\partial \rho _{I}(t;\tau ')}{\partial \tau '}\text{d}\tau '.
\end{align}
\end{theorem}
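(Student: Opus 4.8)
\section*{Proof proposal}

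The plan is to recognize $\rho_I(t;\tau)$ as the cumulative-in-age distribution of infected nodes and to derive an age-structured balance law of McKendrick--von Foerster type for the infection age $T(t)$, then integrate it over the age variable to obtain Eq.~(\ref{eq:grp-sis}). The natural object is the infection-age density $u(t,\tau):=\partial_\tau\rho_I(t;\tau)$, which by the first line of Eq.~(\ref{eq:pItt}) equals $\rho_I(t)\,f_{T(t)}(\tau)$ and counts infected nodes whose most recent infection occurred exactly an age $\tau$ earlier. Since an infected node that has not yet recovered simply ages at unit rate ($\mathrm{d}T/\mathrm{d}t=1$ along characteristics), the transport operator acting on $u$ is $\partial_t+\partial_\tau$, which is the origin of the left-hand side of the claimed equation.

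First I would write the local balance for $u$ on an infinitesimal window $[t,t+\mathrm{d}t]$. Two mechanisms act: recovery removes infected nodes, and new infections inject nodes at age $\tau=0$. For the recovery sink, assumptions 2)--4) make the recovery a renewal process with i.i.d.\ waiting times of density $w$; combined with the survival identity Eq.~(\ref{eq10}), a node still infected at age $\tau$ recovers in $[\tau,\tau+\mathrm{d}\tau]$ with conditional (hazard) probability $w(\tau)\,\mathrm{d}\tau/F_0(\tau)$, giving the age-local sink $-[w(\tau)/F_0(\tau)]\,u(t,\tau)$. For the source, assumption 1) together with the homogeneous mean-field closure gives the rate of new infections as $\beta\langle k\rangle\rho_S(t)\rho_I(t)$; these enter at age zero, so they supply the boundary inflow $u(t,0)=\beta\langle k\rangle\rho_S(t)\rho_I(t)$. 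The local law is therefore $\partial_t u+\partial_\tau u=-[w(\tau)/F_0(\tau)]\,u$ subject to that boundary condition.

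Next I would integrate the local law in age from $0$ to $\tau$. By the definition of $u$, $\int_0^\tau\partial_t u\,\mathrm{d}\tau'=\partial_t\rho_I(t;\tau)$; the transport term telescopes, $\int_0^\tau\partial_{\tau'}u\,\mathrm{d}\tau'=u(t,\tau)-u(t,0)=\partial_\tau\rho_I(t;\tau)-\beta\langle k\rangle\rho_S(t)\rho_I(t)$; and the sink integrates to $-\int_0^\tau[w(\tau')/F_0(\tau')]\,\partial_{\tau'}\rho_I(t;\tau')\,\mathrm{d}\tau'$. Transferring the boundary contribution to the right-hand side and restoring the step function $\varepsilon(\tau)$, which records that the source is active only for $\tau\ge 0$, yields exactly Eq.~(\ref{eq:grp-sis}). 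As a consistency check I would let $\tau\to\infty$ using Eq.~(\ref{eq19}): with $\partial_\tau\rho_I(t;\tau)\to 0$ the equation collapses to an evolution law for the total prevalence $\rho_I(t)$, and for $w(t)=\mu\mathrm{e}^{-\mu t}$ (so that $w/F_0\equiv\mu$) it reduces to the classical SIS equation Eq.~(\ref{eq:classic-sis}), confirming the derivation.

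The main obstacle is justifying the recovery sink, in particular the passage from the waiting-time density $w$ to the age-dependent hazard $w(\tau)/F_0(\tau)$. This is precisely where the modelling assumptions carry the weight: the regularity condition Eq.~(\ref{eq7}) ensures that at most one recovery event occurs in $[\tau,\tau+\mathrm{d}\tau]$ up to $\mathcal{O}(\mathrm{d}\tau)$, while the independence in assumption 4) is what makes the per-age recovery probability depend only on the current infection age $T(t)$ and not on the node's full infection history, thereby legitimizing the renewal (age-only) description. A secondary subtlety worth stating is that the injection term is not an exact identity but the homogeneous mean-field approximation, in which pair and higher-order correlations are neglected and every node is assigned the average degree $\langle k\rangle$; I would therefore present the theorem as holding at the mean-field level.
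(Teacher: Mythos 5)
Your proposal is correct, but it reaches Eq.~(\ref{eq:grp-sis}) by a genuinely different route from the paper. You set up the infection-age density $u(t,\tau)=\partial_\tau\rho_I(t;\tau)$ and write the McKendrick--von Foerster balance law $\partial_t u+\partial_\tau u=-[w(\tau)/F_0(\tau)]u$ with boundary inflow $u(t,0)=\beta\langle k\rangle\rho_S(t)\rho_I(t)$, then integrate in age; in effect you derive first the differentiated equation that the paper only obtains afterwards as Eq.~(\ref{eq:grp-sis-plus}) in Remark~\ref{rem3}, and recover the theorem by integrating it back. The paper instead works directly with the cumulative quantity $\rho_I(t;\tau)$: it applies the total probability formula over the window $(t,t+\mathrm{d}t]$, computes the infection influx and the survival factor $F_0(\tau+\mathrm{d}t)/F_0(\tau)=1-[w(\tau)/F_0(\tau)]\,\mathrm{d}t+o(\mathrm{d}t)$ separately, and then passes to the limit. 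The shared key step is the hazard-rate identity, which you justify the same way. What your approach buys is brevity and a transparent link to standard age-structured population dynamics; what the paper's approach buys is an explicit probabilistic control of the compound-transition events (a node recovering and being re-infected, or being infected and recovering, within the same window $\mathrm{d}t$), which it bounds as $\mathcal{O}(\mathrm{d}t^2)$ via the double integrals over the regions $D_1$ and $D_2^\tau$ using assumption 2). You dispose of these events with a one-line appeal to Eq.~(\ref{eq7}); that is the right idea, but if you wrote this up you should spell out that the excluded scenarios involve both an infection and a recovery waiting time falling inside the window, so their probability is a product of two $\mathcal{O}(\mathrm{d}t)$ factors --- otherwise the local balance law for $u$ is asserted rather than proved. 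Your closing consistency checks (the $\tau\to\infty$ limit and the exponential reduction to the classical SIS equation) correspond exactly to the paper's Remark~\ref{rem2}.
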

\begin{proof}
Assume that at time $t$, the infection duration of an infected node is $T(t)$, the total recovery waiting time of the node is $W^{I}$, and the infection waiting time of a susceptible node is $W^{S}$. We then classify and prove the different state changes of nodes.

1) Assume that at time $t$, a node is susceptible. We need to find the probability that the node becomes infected after a time interval
$\text{d}t$.

The node has an average of $\langle k\rangle \rho _{I}(t)$ infected neighboring nodes. First, we consider the case where the node gets infected in the interval $(t, t+\text{d}t]$ and remains infected until time $t+\text{d}t$. Due to the memoryless property of the exponential distribution, this case can be regarded as if the node has just recovered at time $t$, without considering its previous state. Let the region $D_{1}:=\{(\tau _{1},\tau _{2})\in \mathbb R_{+}^{2}|\tau _{1}\leq \text{d}t,\tau _{1}+\tau _{2}>\text{d}t\}$, then the probability is
%
\begin{align}
&P\{W^{S}\leq \text{d}t,W^{S}+W^{I}>\text{d}t\}
=\iint _{D_{1}}w(\tau _{1})\beta \langle k\rangle \rho _{I}(t)
\text{e}^{-\beta \langle k\rangle \rho _{I}(t)\tau _{2}}\text{d}\tau _{1}
\text{d}\tau _{2}.
\label{eq22}
\end{align}
Let the region
$D_{2}^{\tau}:=\{(\tau _{1},\tau _{2})\in \mathbb R_{+}^{2}|\tau <
\tau _{1}\leq \tau +\text{d}t,\tau _{1}+\tau _{2}\leq \tau +\text{d}t\}$ Then
we know that there is
%
\begin{align}
&\iint _{D_{1}+D_{2}^{0}}w(\tau _{1})\beta \langle k\rangle \rho _{I}(t)
\text{e}^{-\beta \langle k\rangle \rho _{I}(t)\tau _{2}}\text{d}\tau _{1}
\text{d}\tau _{2}
\notag
\\
&=\int _{\mathbb R_{+}}w(\tau _{1})\text{d}\tau _{1}\int _{0}^{\text{d}t}
\beta \langle k\rangle \rho _{I}(t)\text{e}^{-\beta \langle k\rangle
\rho _{I}(t)\tau _{2}}\text{d}\tau _{2}
\notag
\\
&=1-\text{e}^{-\beta \langle k\rangle \rho _{I}(t) \text{d}t}
\notag
\\
&=\beta \langle k\rangle \rho _{I}(t) \text{d}t + \mathcal{O}(\text{d}t^{2}).
\label{eq23}
\end{align}
Also, for the region $D_{2}^{\tau}$, let
%
\begin{align}
E := [\tau , \tau +\text{d}t] \times [0, \text{d}t],
\label{eq24}
\end{align}
which denotes the Cartesian product of the closed intervals $[\tau , \tau +\text{d}t]$ and $[0, \text{d}t]${, using} assumption, we have
%
\begin{align}
&\iint _{D_{2}^{\tau}}w(\tau _{1})\beta \langle k\rangle \rho _{I}(t)
\text{e}^{-\beta \langle k\rangle \rho _{I}(t)\tau _{2}}\text{d}\tau _{1}
\text{d}\tau _{2}
\notag
\\
&\leq \iint _{E}w(\tau _{1})\beta \langle k\rangle \rho _{I}(t)
\text{e}^{-\beta \langle k\rangle \rho _{I}(t)\tau _{2}}\text{d}\tau _{1}
\text{d}\tau _{2}
\notag
\\
&=\mathcal{O}(\text{d}t)\int _{\tau}^{\tau +\text{d}t} w(\tau _{1})
\text{d}\tau _{1}
\notag
\\
&=\mathcal{O}(\text{d}t)P\{N(\tau +\text{d}t)\geq 1,N(\tau )=0\}
\notag
\\
&=\mathcal{O}(\text{d}t^{2}).
\label{eq25}
\end{align}
Here, the last step follows from assumption 2) of the model. So, obviously:
%
\begin{align}
&\iint _{D_{1}}w(\tau _{1})\beta \langle k\rangle \rho _{I}(t)\text{e}^{-
\beta \langle k\rangle \rho _{I}(t)\tau _{2}}\text{d}\tau _{1}\text{d}
\tau _{2}=\beta \langle k\rangle \rho _{I}(t) \text{d}t + \mathcal{O}(\text{d}t^{2}).
\label{eq26}
\end{align}

For the remaining cases, it can be asserted that their probability is $\mathcal{O}(\text{d}t^{2})$. This is because, in any case, the node will be infected at least once and recover within the time interval $(t, t +\text{d}t]$, hence $W^{S} + W^{I} \leq \text{d}t$. Therefore, in the remaining cases, we must have $(W^{I}, W^{S}) \in D_{2}^{0}$, thus
%
\begin{align}
&P\{\text{remaining cases}\}\leq \iint _{D_{2}^{0}} w(\tau _{1}) \beta \langle k\rangle \rho _{I}(t)
\text{e}^{-\beta \langle k\rangle \rho _{I}(t) \tau _{2}} \text{d}\tau _{1}
\text{d}\tau _{2}
= \mathcal{O}(\text{d}t^{2}).
\label{eq27}
\end{align}
Additionally, since an infected node can't have an infection time less than $0$, we need to multiply by the unit step function. Therefore, the final probability is $\beta \langle k\rangle \rho _{I}(t)\varepsilon (\tau ) \text{d}t + \mathcal{O}(\text{d}t^{2})$.

2) At time $t$, a node has been infected for exactly $\tau $ time. We need to find the probability that the node remains infected after a time interval $\text{d}t$.

Assume that the node has not healed until time $t+\text{d}t$.

The probability of this situation occurring is
%
\begin{align}
&P\{W^{I}-T(t)>\text{d}t|W^{I}>\tau ,T(t)=\tau \}
\notag
\\
&=
\frac{P\{W^{I}-T(t)>\text{d}t,W^{I}>\tau |T(t)=\tau \}}{P\{W^{I}>\tau \}}
\notag
\\
&=\frac{P\{W^{I}>\tau + \text{d}t\}}{P\{W^{I}>\tau \}}=
\dfrac{\int _{\tau +\text{d}t}^{\infty }w(t)\text{d}t}{\int _{\tau}^{\infty }w(t)\text{d}t}
\notag
\\
&=\dfrac{F_{0}(\tau +\text{d}t)}{F_{0}(\tau )}
=1-\dfrac{w(\tau )}{F_{0}(\tau )}\text{d}t+o(\text{d}t),a.e.
\label{eq28}
\end{align}

Other situations involve at least one susceptible interval appearing within the interval $[t, t+\text{dt}]$. It can be asserted that the probability of these situations is $\mathcal{O}(\text{d}t^{2})$. This is because, in the remaining cases, we must have $W^{I} + W^{S} \leq \tau + \text{d}t$ and $W^{I} > \tau $, so we must have $(W^{S}, W^{I}) \in D_{2}^{\tau}$, thus
%
\begin{align}
&P\{\text{remaining cases}\}
\leq \iint _{D_{2}^{\tau}}w(\tau _{1})\beta \langle k\rangle \rho _{I}(t)
\text{e}^{-\beta \langle k\rangle \rho _{I}(t)\tau _{2}}\text{d}\tau _{1}
\text{d}\tau _{2}
=\mathcal{O}(\text{d}t^{2}).
\label{eq29}
\end{align}

Therefore, the final probability is $1-\dfrac{w(\tau )}{F_{0}(\tau )}\text{d}t+o(\text{d}t),a.e${.}

In addition, by the property {(\ref{eq:pItt})} we can know that
%
\begin{align}
\label{eq:T-pI}
f_{T(t)}(\tau )=\dfrac{1}{\rho _{I}(t)}
\dfrac{\partial \rho _{I}(t;\tau )}{\partial \tau}.
\end{align}

Therefore, from the total probability formula, the probability that the node at the time of $t+\text{d}t$ is infected not more than $\tau +\text{d}t$ is:
%
\begin{align}
&\rho _{I}(t+\text{d}t;\tau +\text{d}t)
=\left (\beta \langle k\rangle \rho _{I}(t)\varepsilon (\tau )
\text{d}t\right )\rho _{S}(t)
 +\rho _{I}(t)\int _{0}^{\tau}\left (1-
\dfrac{w(\tau ')}{F_{0}(\tau ')}\text{d}t\right )
\dfrac{1}{\rho _{I}(t)}
\dfrac{\partial \rho _{I}(t;\tau )}{\partial \tau}\text{d}\tau '
+o(\text{d}t).
\label{eq31}
\end{align}
Subtract $\rho _{I}(t; \tau )$ and set $\text{d}t\rightarrow 0$, we get
%
\begin{align}
&\lim _{\text{d}t\rightarrow 0}
\frac{\rho _{I}(t+\text{d}t;\tau +\text{d}t)-\rho _{I}(t;\tau )}{\text{d}t}=
\frac{\partial \rho _{I}}{\partial t}+
\frac{\partial \rho _{I}}{\partial \tau}
=\beta \langle k\rangle \rho _{S}(t)\rho _{I}(t)\varepsilon (\tau )-
\int _{0}^{\tau}\frac{w(\tau ')}{F_{0}(\tau ')}
\frac{\partial \rho _{I}(t;\tau ')}{\partial \tau '}\text{d}\tau '.
\label{eq32}
\end{align}
That completes the proof.
\end{proof}
\begin{figure}[htbp]
    \centering
    \subfigure[Susceptible node transition]{
        \includegraphics[width=0.6\linewidth]{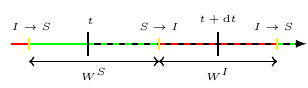}
    }

    \subfigure[Infected node transition]{
        \includegraphics[width=0.6\linewidth]{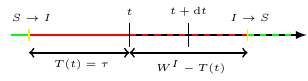}
    }

    \subfigure[Alternative infected node transition]{
        \includegraphics[width=0.6\linewidth]{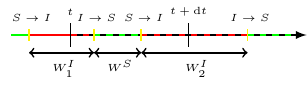}
    }
\captionof{figure}{
    \textbf{Schematic diagram of state transitions in nodes under different scenarios.} The timeline 
    illustrates the changes in node states over time, with green segments denoting the 
    susceptible state and red segments indicating the infected state. Dashed lines demonstrate 
    potential future states. In the timeline, black vertical lines mark the boundaries of the time 
    interval \((t, t+\text dt]\) considered in the derivation, 
    while yellow vertical lines indicate the actual moments of state transitions. 
    Therefore, the timeline segments on both sides of a black vertical line have the same color, 
    while those on both sides of a yellow vertical line have different colors, reflecting the state change at that moment. 
    The top timeline (a) shows the most likely scenario when the current 
    node state is susceptible. The middle timeline (b) depicts the most probable scenario when the 
    current node state is infected. The bottom timeline (c) illustrates other possible scenarios 
    when the current node state is infected, excluding the scenario shown in the middle timeline (b). 
    } 

  \label{fig:state-change-diagram}
  \end{figure}
%
\begin{remark}
\label{rem1}
The integral term in Eq. {(\ref{eq:grp-sis})},
%
\begin{align}
Q(t,\tau ) := \int _{0}^{\tau} \frac{w(\tau ')}{F_{0}(\tau ')}
\frac{\partial \rho _{I}(t;\tau ')}{\partial \tau '} \, \text d\tau ',
\label{eq33}
\end{align}
represents the cumulative effect of recovery processes on the infected population over time. Specifically, the term $\frac{w(\tau ')}{F_{0}(\tau ')}$ denotes the instantaneous hazard rate, which quantifies the probability density of recovery occurring at $\tau '$ given that no recovery has happened before $\tau '$. The integrand $\frac{\partial \rho _{I}(t;\tau ')}{\partial \tau '}$ captures the rate at which infected individuals accumulate at different infection ages $\tau '$, accounting for the continuous progression of infection within the population.

Physically, $Q(t,\tau )$ describes the total influence of the recovery process on individuals who have been infected for a duration $\tau $, effectively measuring how the distribution of infection ages impacts the overall dynamics of disease persistence. It characterizes the rate at which individuals with an infection age of at most $\tau $ contribute to the depletion of the infected class due to recovery. This term ensures that the model properly accounts for non-exponential recovery patterns, distinguishing it from classical Markovian SIS models.
\end{remark}

\begin{remark}
\label{rem2}
In Eq. {(\ref{eq:grp-sis})}, if we set $\tau \rightarrow \infty $, we can get the equation for infection density $\rho _{I}(t)$:
%
\begin{align}
\label{eq:special-eq}
\frac{{\mathrm{{d}}}\rho _{I}(t)}{{\mathrm{{d}}}t}=\beta \langle k\rangle \rho _{S}(t)
\rho _{I}(t)-\int _{\mathbb R_{+}}\frac{w(\tau )}{F_{0}(\tau )}
\frac{\partial \rho _{I}(t;\tau )}{\partial \tau}{\mathrm{{d}}}\tau.
\end{align}
But this equation still contains $\rho _{I}(t; \tau )$ term, which makes it difficult to use Eq. {(\ref{eq:special-eq})} to solve the infection density in general. But we can note that if $w(\tau )$ is chosen so that the equation $\frac{w(\tau )}{F_{0}(\tau )}= constant $ is satisfied, this means that the recovery waiting time follows an exponential distribution, so that the integral on the right side of the equation just simplifies, which makes Eq. {(\ref{eq:special-eq})} become Eq. {(\ref{eq:classic-sis})}{. This} is exactly the homogeneous mean-field equation of the classical SIS model.
\end{remark}
%
\begin{remark}
\label{rem3}
In general, Eq. {(\ref{eq:grp-sis})} is not very convenient for solving problems. We can take the partial derivative of $\tau $ on both sides of the equation at the same time to get the following equation:
%
\begin{align}
\label{eq:grp-sis-plus}
{\frac{\partial ^{2}\rho _{I}}{\partial \tau ^{2}}+
\frac{\partial ^{2}\rho _{I}}{\partial t\partial \tau}+
\frac{w(\tau )}{F_{0}(\tau )}\frac{\partial \rho _{I}}{\partial \tau}=
\beta \langle k\rangle \rho _{S}(t)\rho _{I}(t)\delta (\tau )},
\end{align}
where $\delta (\tau )$ is the unit impulse function. This is a second-order non-homogeneous linear partial differential equation with variable coefficients, which can be combined with Eq. {(\ref{eq:grp-sis-plus})} and initial infection density distribution $(\rho _{I}(0),\rho _{S}(0))$, initial infection time distribution $\rho _{I}(0; \tau )$ These three conditions form a definite solution problem to be solved.
\end{remark}

As an application of {Theorem~\ref{thm1}}, we solve the solution under the classical SIS system. At this time there are $w (\tau ) = \mu \text{e} ^ {- \mu \tau} $, and $\rho _{I} (t) $ is given by Eq. {(\ref{eq:classic-sis-analytical})}. For convenience, we use the symbol $g(t,\tau )$ to denote $\beta \langle k\rangle \rho _{S}(t)\rho _{I}(t)\varepsilon (\tau )$. This gives us a first-order linear partial differential equation:
%
\begin{align}
\frac{\partial \rho _{I}(t;\tau )}{\partial t}+
\frac{\partial \rho _{I}(t;\tau )}{\partial \tau}+\mu \rho _{I}(t;
\tau )=g(t,\tau ).
\label{eq36}
\end{align}
Given the initial infection time distribution $\rho _{I}(0; \tau )$, we can get the special solution of the equation under this problem:
%
\begin{align}
\rho _{I}(t;\tau )=&\beta \langle k\rangle \int _{0}^{\min (t,\tau )}
\rho _{S}(t-\xi )\rho _{I}(t-\xi )\text{e}^{-\mu \xi}\text{d}\xi
+\rho _{I}(0;\tau -t)\text{e}^{-\mu t}.
\label{eq37}
\end{align}

\subsection{Statistical characteristics}
\label{sec:3}

This subsection will present some statistical characteristics of the grp-SIS model. To obtain the infection density at the steady state of propagation, we first need to describe the results related to the time already infected, $T(\infty )$, of infected nodes at steady state.

\begin{theorem}
\label{tho:PDF-T}
If $\mathbb{E}[W]$ exists, the probability density function of $T(\infty )$ in the non-absorbing state is given by the following expression:
%
\begin{align}
f_{T(\infty )}(\tau )=\dfrac{F_{0}(\tau )}{\mathbb{E}[W]}\varepsilon (
\tau ).
\label{eq38}
\end{align}
\end{theorem}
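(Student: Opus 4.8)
The plan is to extract $f_{T(\infty)}$ directly from the steady-state limit of the governing PDE, Eq.~(\ref{eq:grp-sis}), rather than invoking renewal theory as an external black box. First I would pass to the steady state by letting $t\to\infty$ and setting $\partial\rho_I/\partial t = 0$, writing $\rho_I(\infty)$, $\rho_S(\infty)$ and $\rho_I(\infty;\tau)$ for the limiting quantities. In the non-absorbing state $\rho_I(\infty)>0$, so by the relation Eq.~(\ref{eq:T-pI}) it suffices to determine $h(\tau) := \mathrm{d}\rho_I(\infty;\tau)/\mathrm{d}\tau = \rho_I(\infty)\,f_{T(\infty)}(\tau)$. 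Substituting the steady-state ansatz into Eq.~(\ref{eq:grp-sis}) collapses the PDE, for $\tau>0$, into the Volterra integral equation
\begin{align}
h(\tau) = C - \int_0^\tau \frac{w(\tau')}{F_0(\tau')}\,h(\tau')\,\mathrm{d}\tau', \notag
\end{align}
where $C := \beta\langle k\rangle\,\rho_S(\infty)\rho_I(\infty)$ is independent of $\tau$.

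Next I would differentiate this equation in $\tau$ to obtain the separable linear ODE $h'(\tau) = -\tfrac{w(\tau)}{F_0(\tau)}\,h(\tau)$. The key observation is that, by the definition of $F_0$ in Eq.~(\ref{eq10}), $F_0'(\tau)=-w(\tau)$, so the hazard rate satisfies $w(\tau)/F_0(\tau)=-\tfrac{\mathrm{d}}{\mathrm{d}\tau}\ln F_0(\tau)$. The ODE therefore reads $\tfrac{\mathrm{d}}{\mathrm{d}\tau}\ln h(\tau)=\tfrac{\mathrm{d}}{\mathrm{d}\tau}\ln F_0(\tau)$, whose solution is $h(\tau)=A\,F_0(\tau)$ for some constant $A$. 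Evaluating the integral equation as $\tau\to 0^+$, where the integral vanishes and $F_0(0)=1$, fixes $A=C=h(0)$, so $h(\tau)=C\,F_0(\tau)$ and hence $f_{T(\infty)}(\tau)=(C/\rho_I(\infty))\,F_0(\tau)$ for $\tau>0$.

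Finally I would pin down the prefactor by normalization rather than by computing $C$ and $\rho_I(\infty)$ separately, which is the elegant part of the argument. Since $f_{T(\infty)}$ is a probability density supported on $[0,\infty)$, I impose $\int_0^\infty f_{T(\infty)}(\tau)\,\mathrm{d}\tau=1$ and use the standard identity $\int_0^\infty F_0(\tau)\,\mathrm{d}\tau=\int_0^\infty P\{W>\tau\}\,\mathrm{d}\tau=\mathbb{E}[W]$, valid precisely because $\mathbb{E}[W]$ exists (the hypothesis of the theorem). This gives $C/\rho_I(\infty)=1/\mathbb{E}[W]$, and reattaching the step function $\varepsilon(\tau)$ to encode the support yields Eq.~(\ref{eq38}).

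I expect the main obstacle to be the analytic bookkeeping at the steady state rather than the algebra: one must argue that the limits $\rho_I(\infty;\tau)$ and their $\tau$-derivative exist and that $\partial_t\rho_I(t;\tau)\to 0$, so that the PDE legitimately reduces to the stated integral equation, and one must invoke the non-absorbing hypothesis to guarantee $\rho_I(\infty)>0$ so that dividing by it is justified. The finiteness of $\mathbb{E}[W]$ is exactly what makes the normalization integral converge, so both hypotheses of the theorem enter at cleanly identified points.
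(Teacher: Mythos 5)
Your proposal is correct and follows essentially the same route as the paper: both pass to the steady state of Eq.~(\ref{eq:grp-sis}), reduce it (via Eq.~(\ref{eq:T-pI}) and the non-absorbing condition $\rho_I^{\infty}\neq 0$) to the same Volterra integral equation, differentiate to obtain the first-order linear ODE $f' + (w/F_0)f = 0$ whose solution is a constant multiple of $F_0(\tau)$, and fix the constant by normalization together with the identity $\int_0^{\infty}F_0(\tau)\,\mathrm{d}\tau = \mathbb{E}[W]$. The only cosmetic difference is that you carry the factor $\rho_I(\infty)$ inside $h(\tau)$ and divide it out at the end, whereas the paper divides it out at the start.
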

\begin{proof}
First, from Eq. {(\ref{eq:grp-sis})} and Eq. {(\ref{eq:T-pI})}, we have:
%
\begin{align}
\frac{\partial \rho _{I}(t;\tau )}{\partial t}+f_{T(t)}(\tau )\rho _{I}(t)=
\beta \langle k\rangle \rho _{S}(t)\rho _{I}(t)\varepsilon (\tau )
-\rho _{I}(t)\int _{0}^{\tau}\frac{w(\tau ')}{F_{0}(\tau ')} f_{T(t)}(
\tau ')\text{d}\tau '.
\label{eq39}
\end{align}
Letting $t\to \infty $, we have $\frac{\partial \rho _{I}(t;\tau )}{\partial t} \to 0$ and $\rho _{S}(t) \to \rho _{S}^{\infty}$. Substituting these into the above equation, we obtain:
%
\begin{align}
\label{eq:wentai}
\rho ^{\infty}_{I} \left (f_{T(\infty )}(\tau )-\beta \langle k
\rangle \rho _{S}^{\infty}\varepsilon (\tau )+\int _{0}^{\tau}
\frac{w(\tau ')}{F_{0}(\tau ')} f_{T(\infty )}(\tau ')\text{d}\tau '
\right )=0.
\end{align}
Since $\rho _{I}^{\infty }\neq 0$ in the non-absorbing state, we finally obtain the integral equation:
%
\begin{align}
\label{eq:fT}
{f_{T(\infty )}(\tau )=\beta \langle k\rangle \rho ^{\infty}_{S}
\varepsilon (\tau )-\int _{0}^{\tau}\frac{w(t)}{F_{0}(t)}f_{T(\infty )}(t)
\text{d}t}.
\end{align}
This integral equation naturally possesses an initial condition:
%
\begin{align}
f_{T(\infty )}(0) = \beta \langle k\rangle \rho ^{\infty}_{S}.
\label{eq42}
\end{align}
However, this condition contains an unknown variable $\rho _{S}^{\infty}$, so an additional condition is required to obtain a unique solution. Since the desired function is a PDF, it naturally satisfies the normalization condition.

Taking the derivative on both sides of the above equation, we obtain the differential equation:
%
\begin{align}
\frac{\text{d} f_{T(\infty )}(\tau )}{\text{d}\tau} +
\frac{w(\tau )}{F_{0}(\tau )} f_{T(\infty )}(\tau ) = 0, \quad
\forall \tau \geq 0.
\label{eq43}
\end{align}

Solving this differential equation using the two aforementioned conditions and applying assumption 3) of the grp-SIS model, we obtain the final solution:
%
\begin{align}
f_{T(\infty )}(\tau )&=
\frac{F_{0}(\tau )}{\displaystyle \int _{\mathbb R_{+}}F_{0}(\tau )\text{d}\tau}
\varepsilon (\tau )=
\frac{F_{0}(\tau )}{\left .\tau F_{0}(\tau )\right |^{\infty}_{0}-\displaystyle \int _{\mathbb R_{+}}\tau \text{d}F_{0}(\tau )}
\varepsilon (\tau )
\notag
\\
&=
\frac{F_{0}(\tau )}{\displaystyle \int _{\mathbb R_{+}}\tau w(\tau )\text{d}\tau}
\varepsilon (\tau )
=\dfrac{F_{0}(\tau )}{\mathbb{E}[W]}\varepsilon (\tau ).
\label{eq44}
\end{align}
That completes the proof.
\end{proof}
%
\begin{remark}
\label{rem4}
The result of {Theorem~\ref{tho:PDF-T}} indicates that the probability of the random variable $T(\infty )$ taking a value around $\tau $ is proportional to $F_{0}(\tau )$. This conclusion is quite intuitive: $T(\infty )$ represents the duration an infected individual has been infected in the steady state, while $F_{0}(\tau )$ denotes the probability that an individual's infection time has exceeded $\tau $.

Since we focus only on individuals whose infection duration is exactly $\tau $ at the present moment, these individuals must be those who are about to reach an infection time exceeding $\tau $. These ``lucky ones'' can be illustrated by {Fig.~\ref{fig:state-change-diagram}}(b). Therefore, the process of identifying such individuals is equivalent to selecting, from all individuals whose infection time has exceeded $\tau $, those who were infected precisely before time $\tau $.

Given that the system is in a steady state, the proportion of such individuals remains dynamically stable, which implies a constant factor. Consequently, this reasoning naturally leads to the final probability density function, which takes the form of a constant multiple of $F_{0}(\tau )$. In particular, the normalization condition uniquely determines this constant as $1/\mathbb{E}[W]$, completing the derivation of the final result.
\end{remark}
From {Theorem~\ref{tho:PDF-T}}, we can immediately derive the following conclusion:
%
\begin{corollary}
\label{cor1}
If $\mathbb{E}[W^{2}]$ exists, {the} expectation of the infected time of the infected node when the propagation is steady is:
%
\begin{align}
\mathbb E[T(\infty )]=\frac{\mathbb E[W^{2}]}{2\mathbb E[W]}.
\label{eq45}
\end{align}
\end{corollary}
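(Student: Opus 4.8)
The plan is to substitute the density of $T(\infty)$ supplied by Theorem~\ref{tho:PDF-T} directly into the definition of the expectation and then reduce the whole statement to a standard second-moment identity via a single integration by parts. Since $f_{T(\infty)}(\tau)=F_0(\tau)/\mathbb{E}[W]$ for $\tau\geq 0$ and vanishes otherwise, I would first write
\begin{align}
\mathbb{E}[T(\infty)]=\int_{0}^{\infty}\tau\,f_{T(\infty)}(\tau)\,\text{d}\tau
=\frac{1}{\mathbb{E}[W]}\int_{0}^{\infty}\tau\,F_0(\tau)\,\text{d}\tau,
\label{eq:cor-step1}
\end{align}
so that the claim collapses to proving the single identity $\int_{0}^{\infty}\tau\,F_0(\tau)\,\text{d}\tau=\tfrac{1}{2}\mathbb{E}[W^2]$.

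To establish this identity I would integrate by parts, taking $u=F_0(\tau)$ and $\text{d}v=\tau\,\text{d}\tau$. Recalling from Eq.~(\ref{eq10}) that $F_0(\tau)=\int_{\tau}^{\infty}w(s)\,\text{d}s$, and hence $F_0'(\tau)=-w(\tau)$, the integration by parts yields
\begin{align}
\int_{0}^{\infty}\tau\,F_0(\tau)\,\text{d}\tau
=\left[\frac{\tau^{2}}{2}F_0(\tau)\right]_{0}^{\infty}
+\frac{1}{2}\int_{0}^{\infty}\tau^{2}\,w(\tau)\,\text{d}\tau,
\label{eq:cor-step2}
\end{align}
where the remaining integral is exactly $\tfrac{1}{2}\mathbb{E}[W^2]$. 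Combining Eq.~(\ref{eq:cor-step1}) with Eq.~(\ref{eq:cor-step2}) then gives $\mathbb{E}[T(\infty)]=\mathbb{E}[W^2]/(2\mathbb{E}[W])$ once the boundary term is disposed of.

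The only nonroutine point — and precisely the step where the hypothesis that $\mathbb{E}[W^2]$ exists is consumed — is showing that the boundary term in Eq.~(\ref{eq:cor-step2}) vanishes. The contribution at $\tau=0$ is plainly zero, so the issue is the upper limit. Here I would use the bound $\tau^{2}F_0(\tau)=\tau^{2}\int_{\tau}^{\infty}w(s)\,\text{d}s\leq\int_{\tau}^{\infty}s^{2}w(s)\,\text{d}s$, valid because $\tau\leq s$ throughout the domain of integration; since $\mathbb{E}[W^2]=\int_{0}^{\infty}s^{2}w(s)\,\text{d}s<\infty$ by assumption, the tail on the right tends to $0$ as $\tau\to\infty$, forcing $\tau^{2}F_0(\tau)\to 0$. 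With the boundary term gone the computation closes. As a remark, one may bypass the integration by parts altogether by invoking the tail formula $\mathbb{E}[W^2]=\int_{0}^{\infty}2\tau\,P\{W>\tau\}\,\text{d}\tau$, which packages the same computation as a standard identity and makes the finiteness requirement equally transparent.
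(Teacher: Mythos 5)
Your proof is correct and follows exactly the route the paper intends: the paper states the corollary as an immediate consequence of Theorem~\ref{tho:PDF-T} without writing out the computation, and your substitution of $f_{T(\infty)}(\tau)=F_0(\tau)/\mathbb{E}[W]$ followed by integration by parts is the standard calculation being left implicit. Your careful treatment of the vanishing boundary term via the tail bound is a welcome addition that makes explicit where the hypothesis $\mathbb{E}[W^2]<\infty$ is used.
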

%
\begin{remark}
\label{rem5}
There is a subtle fallacy here: since $W$ indicates the total time an individual experiences from the onset of infection until recovery, while $T$ only denotes the time an infected individual in the population has currently been infected and has not yet recovered. Intuitively, the total waiting time $W$ would, on average, not be less than the infection time $T$ of the infected individual. However, if the second moment of $W$ is very large or even does not exist, it turns out that $\mathbb{E}[T(\infty )]$ can be much larger than $\mathbb{E}[W]$.

One way to explain this is: if the variance of $W$ is large, it increases the probability of the occurrence of individuals with long-term infections. Since our infection process is a Poisson process, this results in a greater number of infected individuals, leading to more long-term infected individuals in the population. This causes the accumulation of long-term individuals within the population, significantly raising the average infection time of the population, even if each individual's average recovery time is relatively short.
\end{remark}
%
\begin{corollary}
\label{cor2}
The infection density of the transmission steady state is:
%
\begin{align}
\rho _{I}^{\infty}=
\begin{cases}
\displaystyle 0, & \tau < \tau _{c}
\\\noalign{\vspace{3pt}}
\displaystyle \frac{\tau - \tau _{c}}{\tau}, & \tau > \tau _{c}
\end{cases}.
\label{eq46}
\end{align}
The effective transmission rate is given by $\tau = \beta \mathbb{E}[W]$, and the transmission threshold is $\tau _{c} = \frac{1}{\langle k \rangle}$.
\end{corollary}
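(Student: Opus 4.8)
The plan is to recover the steady-state densities from the two independent characterizations of $f_{T(\infty)}(0)$ that are already in hand, and then close the system with the normalization $\rho_S(t)+\rho_I(t)=1$. First I would evaluate the closed-form density of Theorem~\ref{tho:PDF-T} at the origin. Because $w(t)$ is a probability density supported on $[0,\infty)$, we have $F_0(0)=\int_0^\infty w(\tau)\,\mathrm{d}\tau=1$, and since $\varepsilon(0)=1$ by the convention in Eq.~(\ref{eq17}), the formula $f_{T(\infty)}(\tau)=F_0(\tau)\varepsilon(\tau)/\mathbb{E}[W]$ yields $f_{T(\infty)}(0)=1/\mathbb{E}[W]$.

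Next I would match this value against the initial condition Eq.~(\ref{eq42}) obtained inside the proof of Theorem~\ref{tho:PDF-T}, namely $f_{T(\infty)}(0)=\beta\langle k\rangle\rho_S^\infty$. Equating the two expressions gives $\beta\langle k\rangle\rho_S^\infty=1/\mathbb{E}[W]$, hence $\rho_S^\infty=1/(\beta\langle k\rangle\mathbb{E}[W])$. Invoking $\rho_S^\infty+\rho_I^\infty=1$ then produces $\rho_I^\infty=1-1/(\beta\langle k\rangle\mathbb{E}[W])$. Substituting the effective rate $\tau=\beta\mathbb{E}[W]$ and the threshold $\tau_c=1/\langle k\rangle$ rewrites the denominator as $\beta\langle k\rangle\mathbb{E}[W]=\langle k\rangle\tau=\tau/\tau_c$, so that $\rho_I^\infty=1-\tau_c/\tau=(\tau-\tau_c)/\tau$, which is exactly the stated formula on the branch $\tau>\tau_c$.

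The remaining step, and the one I expect to be the main obstacle, is justifying the dichotomy at the threshold rather than only the positive branch. All of the algebra above is carried out under the standing \emph{non-absorbing} assumption $\rho_I^\infty\neq0$ used in Theorem~\ref{tho:PDF-T}, so it can only describe a nontrivial endemic state. When $\tau<\tau_c$ the expression $(\tau-\tau_c)/\tau$ is negative and therefore cannot represent a density, signalling that no non-absorbing steady state exists in that regime; the only admissible non-negative steady state is then the disease-free one, $\rho_I^\infty=0$. I would make this precise by noting that the endemic branch crosses zero exactly at $\tau=\tau_c$ and forfeits physical admissibility below it, so the system necessarily settles into $\rho_I^\infty=0$ whenever $\tau<\tau_c$. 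As a consistency check I would confirm that in the exponential special case $\mathbb{E}[W]=1/\mu$ this reproduces the endemic level and threshold implicit in Eq.~(\ref{eq:classic-sis-analytical}), thereby validating both the constant $1/\mathbb{E}[W]$ and the placement of the critical point $\tau_c=1/\langle k\rangle$.
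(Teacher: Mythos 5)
Your proof is correct and follows essentially the same route as the paper's: both extract the consistency condition $\beta\langle k\rangle\rho_S^\infty = 1/\mathbb{E}[W]$ from the steady-state integral equation of Theorem~\ref{tho:PDF-T} combined with the normalized density $f_{T(\infty)}(\tau)=F_0(\tau)/\mathbb{E}[W]$, then use $\rho_I^\infty=1-\rho_S^\infty$ and rule out the subcritical branch because a negative (or $>1$) density is inadmissible, forcing $\rho_I^\infty=0$ for $\tau<\tau_c$. The only cosmetic difference is that you read the condition off at $\tau=0$ via $F_0(0)=1$ and Eq.~(\ref{eq42}), whereas the paper takes $\tau\to\infty$ in Eq.~(\ref{eq:fT}) and evaluates $\int_0^\infty \frac{w(t)}{F_0(t)}f_{T(\infty)}(t)\,\mathrm{d}t=1/\mathbb{E}[W]$.
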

\begin{proof}
Consider the case where $\rho _{I}^{\infty }\neq 0$. In this case, apply Eq. {(\ref{eq:fT})} and take the limit $\tau \to \infty $. We can then solve:
%
\begin{align}
\rho _{S}^{\infty }= \frac{1}{\beta \langle k \rangle} \int _{0}^{
\infty }\frac{w(t)}{F_{0}(t)} f_{T(\infty )}(t) \, \text{d}t =
\frac{1}{\beta \langle k \rangle \mathbb{E}[W]}.
\label{eq47}
\end{align}
Thus, we have $\rho _{I}^{\infty }= 1 - \rho _{S}^{\infty }= \frac{\beta \langle k \rangle \mathbb{E}[W] - 1}{\beta \langle k \rangle \mathbb{E}[W]} = \frac{\tau - \tau _{c}}{\tau}$. This should lie within the interval $(0, 1)$, thus implying that $\tau > \tau _{c}$.

It is evident that for $\tau < \tau _{c}$, $\frac{1}{\beta \langle k \rangle \mathbb{E}[W]} > 1$, which contradicts the definition of the susceptible density. Therefore, in this case, Eq. {(\ref{eq:fT})} has no solution, and by solving Eq. {(\ref{eq:wentai})}, we obtain $\rho _{I}^{\infty }= 0$.
\end{proof}

The transmission threshold is a crucial quantity that measures the minimum effective transmission rate required for the disease to persist in the population. From the above result, we observe that for the grp-SIS model, the transmission threshold remains $\tau _{c} = \frac{1}{\langle k \rangle}$, which is consistent with the classical SIS model. This indicates that the transmission threshold of an SIS-type epidemic system is a value solely determined by the structural properties of the network, independent of the specific characteristics of the disease transmission process.

Thus, we solve for the cases of the three distributions mentioned in Section~\ref{sec:2.3}. The {Table~\ref{table:Digital_Features}} presents the complete results.
\begin{table}[!h] 
  \rowcolors{2}{white}{white}
  \arrayrulecolor{lightblue!50!black}
  \centering
  \captionof{table}{Three Statistical Characteristics} 
  \label{table:Digital_Features}
  \resizebox{1.0\columnwidth}{!}{
  \begin{tabular}{c|c|c|c} 
    \toprule[2pt]
    \rowcolor{cyan!30!white}
    \toprule
    \textbf{Distribution} & \textbf{$f_{T(\infty)}(\tau)$} & \textbf{$\mathbb{E}[T(\infty)]$} & \textbf{$\rho_S^\infty$ for $\tau > \tau_c$}\\
    \midrule
    \textbf{Exponential} & $\displaystyle \mu e^{-\mu\tau}\varepsilon(\tau)$ & $\dfrac1\mu$ & $\dfrac{\mu}{\beta\langle k\rangle}$\\
    \midrule
    \textbf{Power Law} & $\displaystyle\left\{\begin{array}{ll}
      \dfrac{\lambda-2}{\lambda-1}\dfrac{1}{t_0}\left(\dfrac{\tau}{t_0}\right)^{1-\lambda}, & \tau > t_0 \\[8pt]
      \dfrac{\lambda-2}{\lambda-1}\dfrac{1}{t_0}, & \tau \in [0,t_0] \\[8pt]
      0, & \tau < 0
    \end{array}\right.$ & $\dfrac{\lambda-2}{\lambda-3}\dfrac{t_0}{2}$ & $\dfrac{\lambda-2}{\lambda-1}\dfrac{1}{t_0}\dfrac{1}{\beta\langle k\rangle}$\\
    \midrule
    \textbf{Log-Normal} & $\displaystyle\left\{\begin{array}{ll}
      \dfrac{\text{erfc}\left(\frac{\ln \tau - \mu}{\sqrt{2}\sigma}\right)}{2e^{\mu+\frac{\sigma^2}{2}}}, & \tau > 0 \\[8pt]
      e^{-\mu-\frac{\sigma^2}{2}}, & \tau = 0 \\[8pt]
      0, & \tau < 0
    \end{array}\right.$ & $\dfrac{1}{2}e^{\mu+\frac{3\sigma^2}{2}}$ & $\dfrac{1}{\beta\langle k\rangle}e^{-\mu-\frac{\sigma^2}{2}}$\\
    \bottomrule
    \bottomrule[2pt]
  \end{tabular}}
\end{table}


\section{Numerical simulation}
\label{sec:4}

Regarding the simulation of continuous-time SIS models, various methods are available. In \cite{li2012susceptible}, a node-centric event-driven approach was used to simulate the $\varepsilon $-SIS model. Similarly, this approach can be applied to the grp-SIS model, resulting in the following node-centric event-driven grp-SIS simulation method in {Algorithm~\ref{alg:ghp-SIS}}.
%
\begin{algorithm}[!h]
    \caption{Node-Centric Event-Driven Simulation of grp-SIS Model}
    \label{alg:ghp-SIS}
    \renewcommand{\algorithmicrequire}{\textbf{Input:}}
    \renewcommand{\algorithmicensure}{\textbf{Output:}}
    
    \begin{algorithmic}[1]
        \REQUIRE $T$: Simulation time, $\beta$: Infection rate, $RWT\_DIST$: Recovering waiting time distribution, $G$: Network graph, $states$: Initial states of nodes
        \ENSURE Number of susceptible and infected nodes over time

        \STATE Initialize event list $E$ with current infected nodes' recovery events based on $RWT\_DIST$
        \WHILE{Simulation time $t < T$}
            \STATE Extract the earliest event $e$ from the event list $E$
            \STATE Update current time to the time of event $e$
            
            \IF {Event $e$ is a recovery event}
                \STATE Change the state of the node from infected to susceptible
                \STATE Remove node's future infection events from $E$
            
            \ELSIF {Event $e$ is an infection event}
                \STATE Check if the target node is susceptible
                \IF {Target node is susceptible}
                    \STATE Change the state of the node to infected
                    \STATE Schedule a new recovery event for the node based on $RWT\_DIST$ and add it to $E$
                    \FOR{Each susceptible neighbor of the infected node}
                        \STATE Schedule a new infection event for the neighbor based on infection rate $\beta$ and add it to $E$
                    \ENDFOR
                \ENDIF
            \ENDIF
        \ENDWHILE
        
        \RETURN Number of susceptible and infected nodes over time
    \end{algorithmic}
  \end{algorithm}

Using the method just introduced, we performed simulations for the three distributions mentioned in Section~\ref{sec:2.3} on three uniform regular graphs with 2500 nodes. We set the simulation time to 50, with an initial infection density of 0.3. We assumed the initially infected nodes to be newly infected. The results are as follows.

\subsection{Changes in infection density}
\label{sec:4.1}

To explore the time evolution of infection and susceptibility densities under different recovery waiting time distributions, we utilized the algorithm previously detailed. For each distribution, we selected appropriate parameter sets and conducted multiple independent simulations on a uniform regular graph. We averaged the results from 50 independent runs to generate the plots shown in {Fig.~\ref{fig:rhoinfty}}, which illustrate how the densities evolve for the three distributions: exponential, lognormal, and power law.

The plots in {Fig.~\ref{fig:rhoinfty}} show that, for all three distributions, the infection and susceptibility densities gradually converge to their theoretical steady-state values, verifying the accuracy of our theoretical predictions. However, the convergence strength decreases from {Fig.~\ref{fig:rhoinfty}}(a)-(c). In {Fig.~\ref{fig:rhoinfty}}(c), a prolonged oscillation appears during the approach to the steady state, indicating underdamped behavior, whereas {Fig.~\ref{fig:rhoinfty}}(a) and {Fig.~\ref{fig:rhoinfty}}(b) show overdamped behaviors with smoother convergence. This difference arises from the long-tailed nature of the recovery waiting time distribution in the power law case. Long-lived infected nodes tend to elevate infection levels beyond expectations, requiring a lower-than-expected susceptibility density to balance infection growth. This imbalance between infection and susceptibility densities introduces an inertia effect, leading to oscillations around the steady-state values. In contrast, higher-order small values in the distribution suppress the formation of long-lived infected nodes, resulting in more concentrated node lifetimes and faster convergence in the exponential and lognormal cases.

In addition to the steady-state analysis, we conducted a transient state analysis of the system to further validate the mean-field formulation given in Eq. {(\ref{eq:grp-sis})}. To this end, we implemented an algorithm based on a dynamic programming approach, which efficiently computes the numerical solution of the grp-SIS dynamics under the prescribed initial conditions. Specifically, the algorithm solves the partial differential-integral equation in a discretized time-$\tau $ space and yields the transient infection density $\rho _{I}^{n}(t)$ as well as the corresponding susceptible density. These numerical solutions are then superimposed on the simulation results in {Fig.~\ref{fig:rhoinfty}} as semi-transparent bands---blue for the infection density and purple for the susceptible density. As shown in the figure, the numerical solution of the dynamical equation and the direct simulation results exhibit substantial overlap, indicating strong agreement in the transient regime. Notably, the simulation-derived susceptible density is slightly lower than the numerical solution. We attribute this discrepancy to the influence of the disease-free equilibrium, which exerts a downward pull on the overall infection density in the simulations, whereas the mean-field dynamic equation primarily captures the propagation behavior in the non-absorbing state.

Additionally, we conducted a statistical analysis of the 50 independent simulation runs for each recovery time distribution. This analysis allowed us to compute the 95\% confidence interval at each time step, as well as the standard deviation, resulting in a confidence band and a standard deviation function, as depicted in {Fig.~\ref{fig:std_and_ci}}.

Observing {Fig.~\ref{fig:std_and_ci}}, we find that the width of the 95\% confidence interval remains below 0.07, while the standard deviation is nearly always less than 0.01. This indicates that, apart from the initial phase of the simulation, the fluctuations in the data are minimal throughout the process. The small variability further validates the stability and accuracy of our simulation methodology, demonstrating the robustness of our numerical approach in capturing the epidemic dynamics under different recovery time distributions.

\begin{figure*}
    \centering
    \subfigure[Exponential]{
        \includegraphics[width=0.3\linewidth]{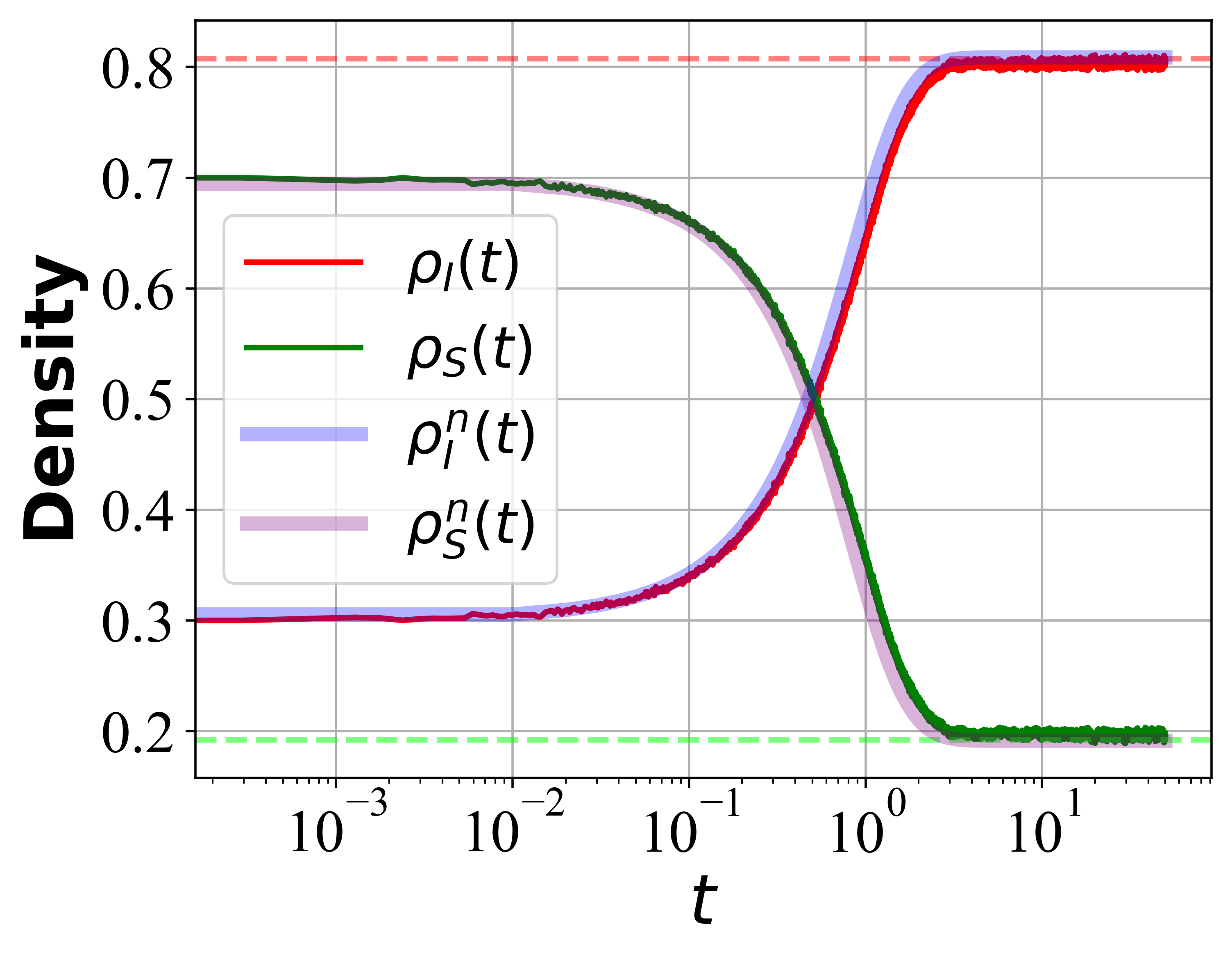}
    }
    \subfigure[Lognormal]{
        \includegraphics[width=0.3\linewidth]{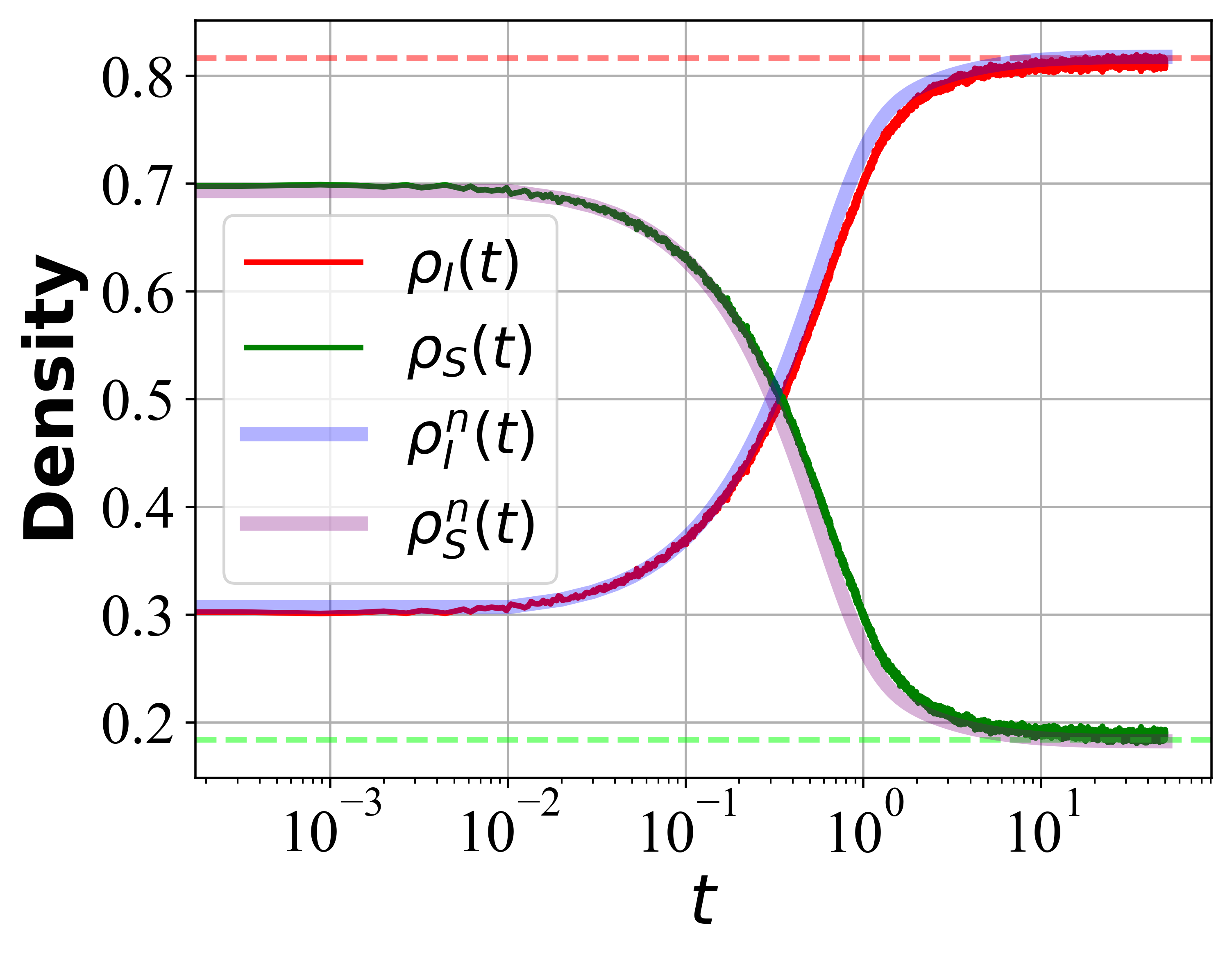}
    }
    \subfigure[Power law]{
        \includegraphics[width=0.3\linewidth]{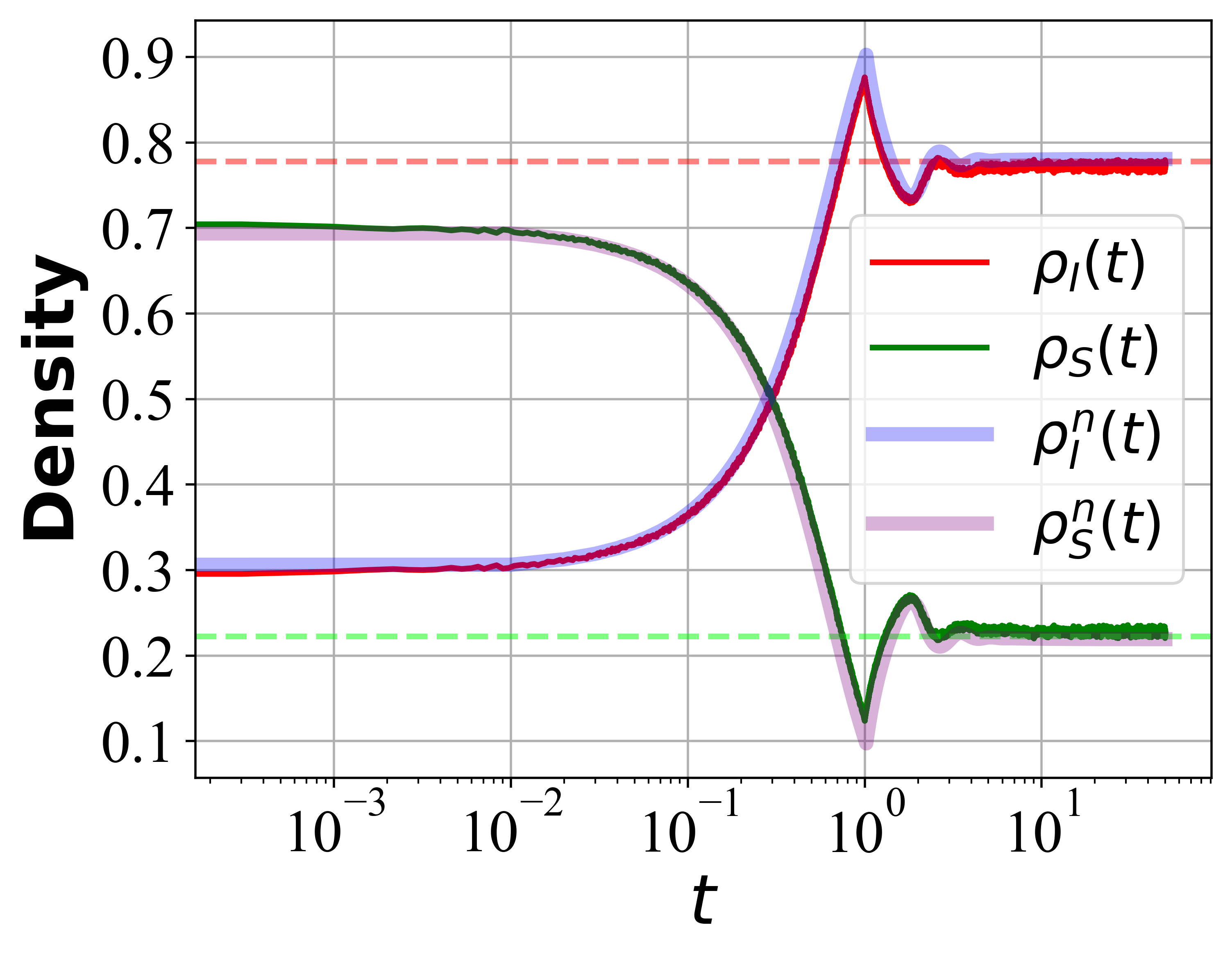}
    }
    \captionof{figure}{
    \textbf{Evolution of infection and susceptibility densities over time for three different recovering waiting time distributions.}  
    The three subfigures show the evolution of infection density $\rho_I(t)$ (red) and susceptibility density $\rho_S(t)$ (green) over simulation time $t$ on a uniform regular graph with degree 10. Subfigure (a) corresponds to an exponential distribution ($\beta = 0.26, \mu = 0.5$), (b) to a lognormal distribution ($\beta = 0.33, \mu = 0, \sigma = 1$), and (c) to a power law distribution ($\beta = 0.3, \lambda = 4, t_0 = 1$).  
    The solid lines represent the average simulation results from 50 independent runs, while the dashed horizontal lines indicate the theoretically predicted steady-state densities for each case. The semi-transparent bands represent the numerical solution of the dynamical equation, with the blue band showing the infection density and the purple band the susceptibility density.  
    The theoretical steady-state infection densities are approximately 0.808, 0.816, and 0.778 for the exponential, lognormal, and power law distributions, respectively.  
    In the figure, it can be seen that the infection density starts at an initial value of 0.3 for all cases and stabilizes near the theoretical steady-state infection density (represented by the horizontal dashed line) around simulation time $t = 10$. Among the three distributions, the exponential distribution reaches the steady-state the fastest and stabilizes, followed by the lognormal distribution. The power law distribution, however, exhibits oscillations around the theoretical steady-state value before stabilizing.  
    The numerical solution of the dynamical equation closely matches the simulation results, though the infection density in the simulation is slightly lower.
} 
    \label{fig:rhoinfty}
\end{figure*}
\begin{figure}
    \centering
    \includegraphics[width=0.8\linewidth]{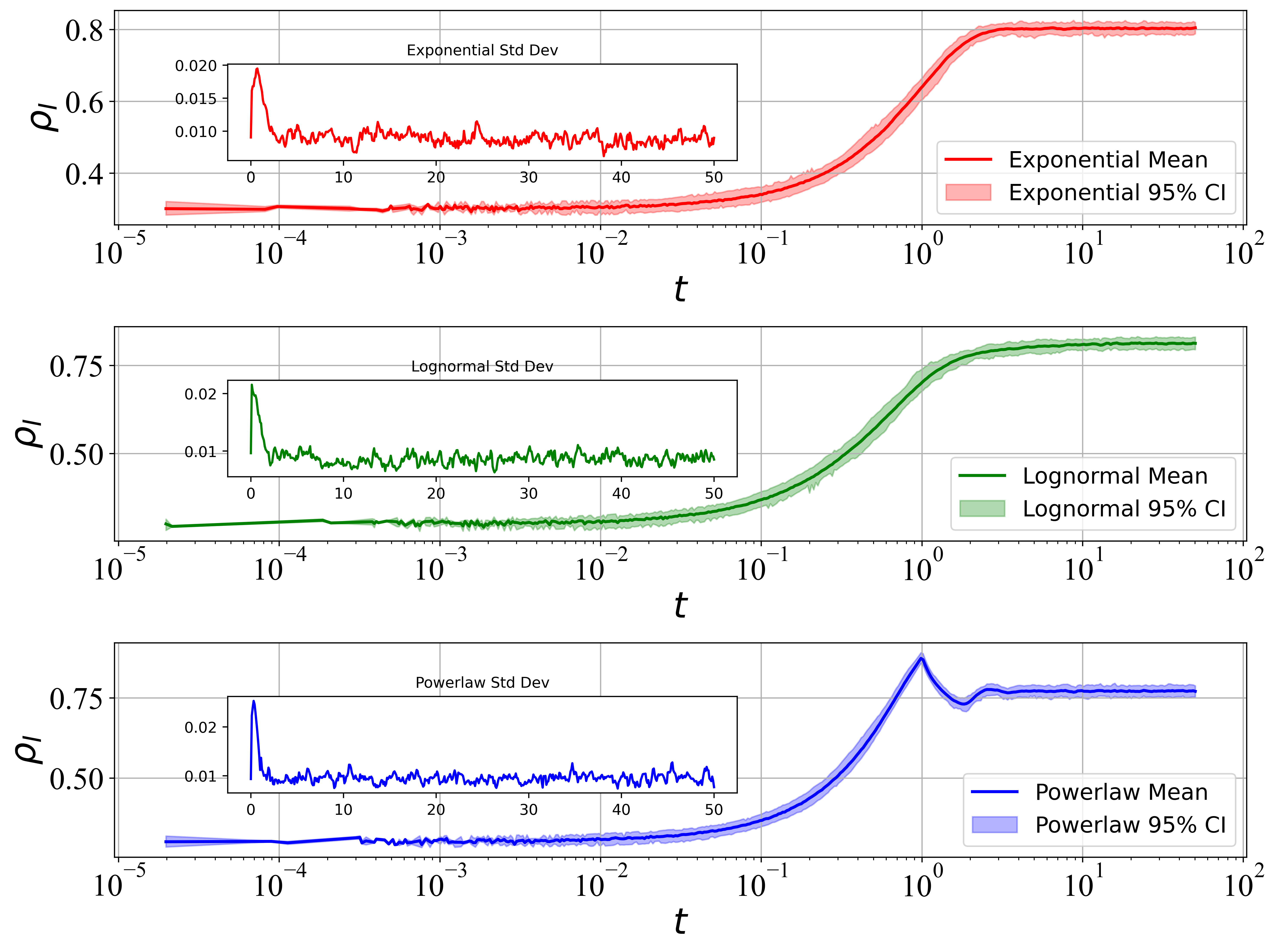}
    \captionof{figure}{ 
    \textbf{Infection Density Evolution and Standard Deviation Comparison for Different Recovery Time Distributions}  
    This figure consists of three subplots, each showing the average infection density over 50 simulations over time, along with the corresponding 95\% confidence interval band (shaded region). The x-axis represents the simulation time, while the y-axis represents the infection density, using a logarithmic scale.  
    From top to bottom, the subplots correspond to simulations using exponential distribution, lognormal distribution, and power law distribution, with the following parameters:
    Exponential distribution ($\beta = 0.26, \mu = 0.5$), Lognormal distribution ($\beta = 0.33, \mu = 0, \sigma = 1$), and Power law distribution ($\beta = 0.3, \lambda = 4, t_0 = 1$).  
    From the figure, it can be observed that the upper bound of the 95\% confidence interval is approximately 0.07. Additionally, each subplot contains an embedded standard deviation plot in the upper right corner, where the x-axis represents simulation time and the y-axis represents the standard deviation of 50 simulation results at each time point. The results show that, except for the initial period, the standard deviation of the simulation results is mostly below 0.01, indicating low variability in the data.  
} 
    \label{fig:std_and_ci}
\end{figure}

\subsection{Final infection density under different $\beta $ values}
\label{sec:4.2}

To explore the effect of varying transmission rates $\beta $ on the steady-state infection densities, we performed multiple independent simulations for each recovery waiting time distribution. For each distribution, we fixed the model parameters and ran simulations over a range of $\beta $ values on uniform regular graphs with degrees $k = 4$, $k = 6$, and $k = 10$, averaging the results from 50 independent runs. The resulting plots, displayed in {Fig.~\ref{fig:rhoinfty-beta}}(a)-(c), illustrate the relationship between $\beta $ and the steady-state infection density $\rho _{I}^{\infty}$ for exponential, lognormal, and power law distributions. Additionally, we aggregated the results for $\langle k \rangle = 10$ from the three distributions and performed a statistical analysis. The corresponding plot, shown in {Fig.~\ref{fig:rhoinfty-beta}}(d), presents the steady-state infection density with 68\% confidence intervals for each distribution.

\begin{figure*}
    \centering
    \subfigure[Exponential]{
        \includegraphics[width=0.45\linewidth]{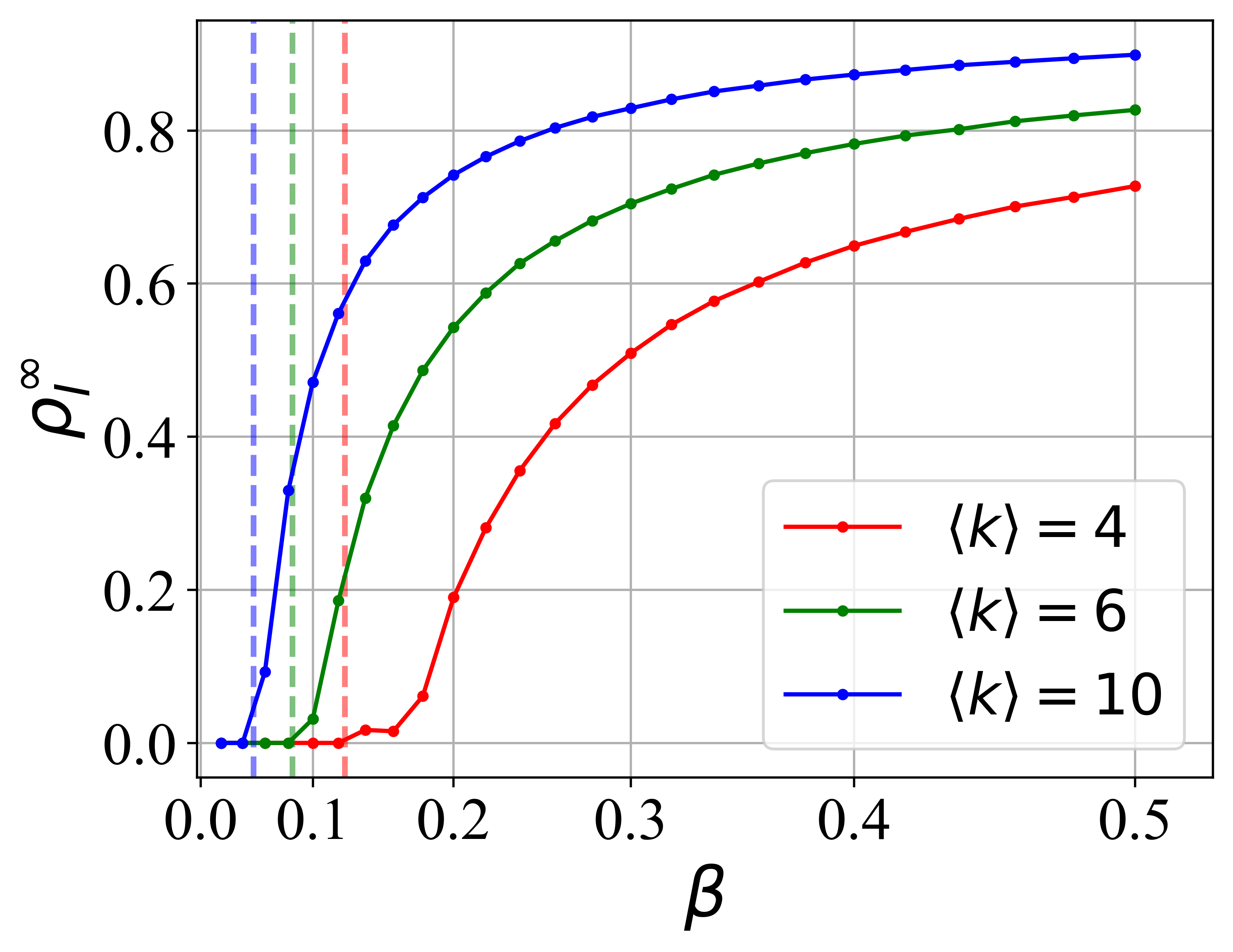}
    }
    \subfigure[Lognormal]{
        \includegraphics[width=0.45\linewidth]{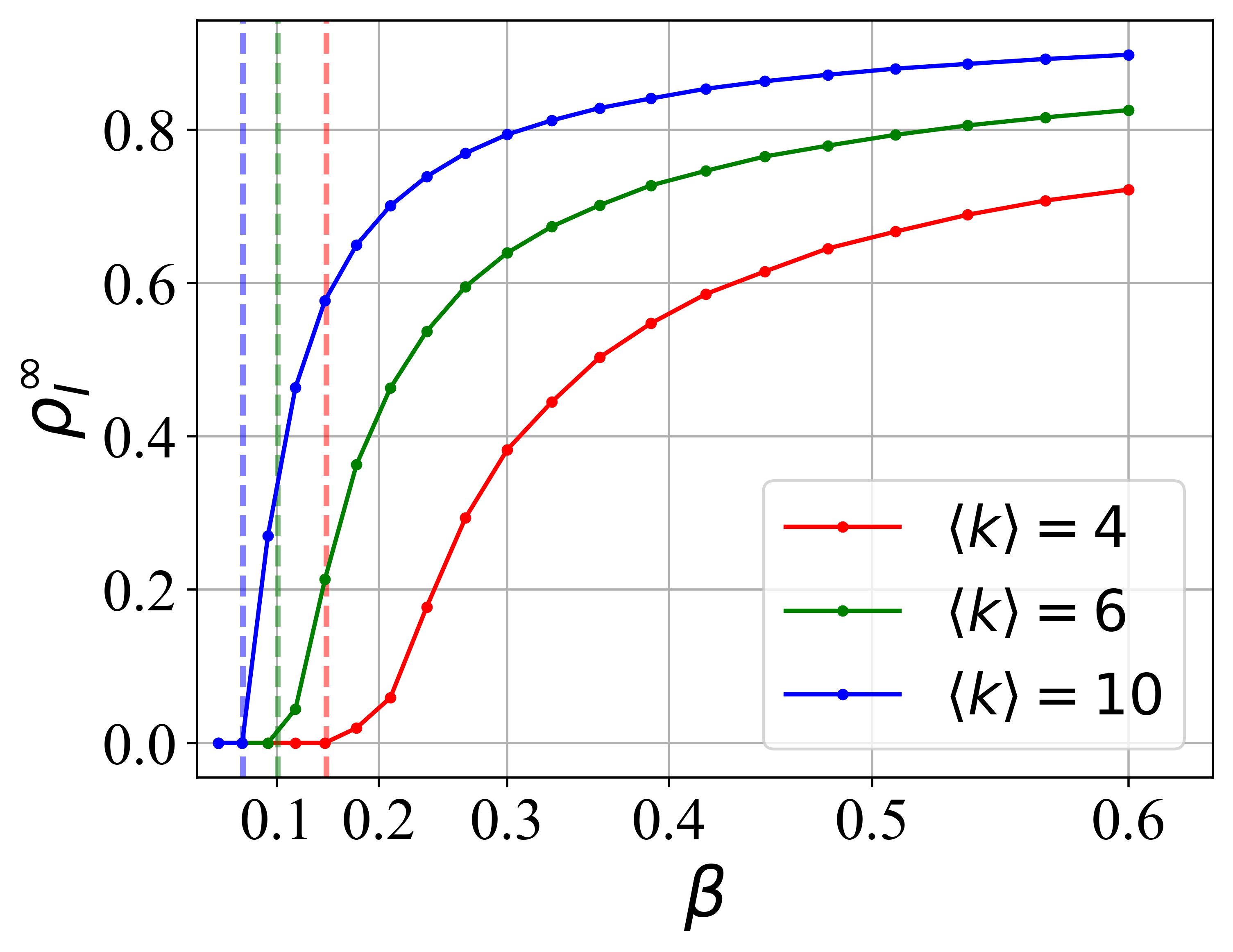}
    }
    \subfigure[Power law]{
        \includegraphics[width=0.45\linewidth]{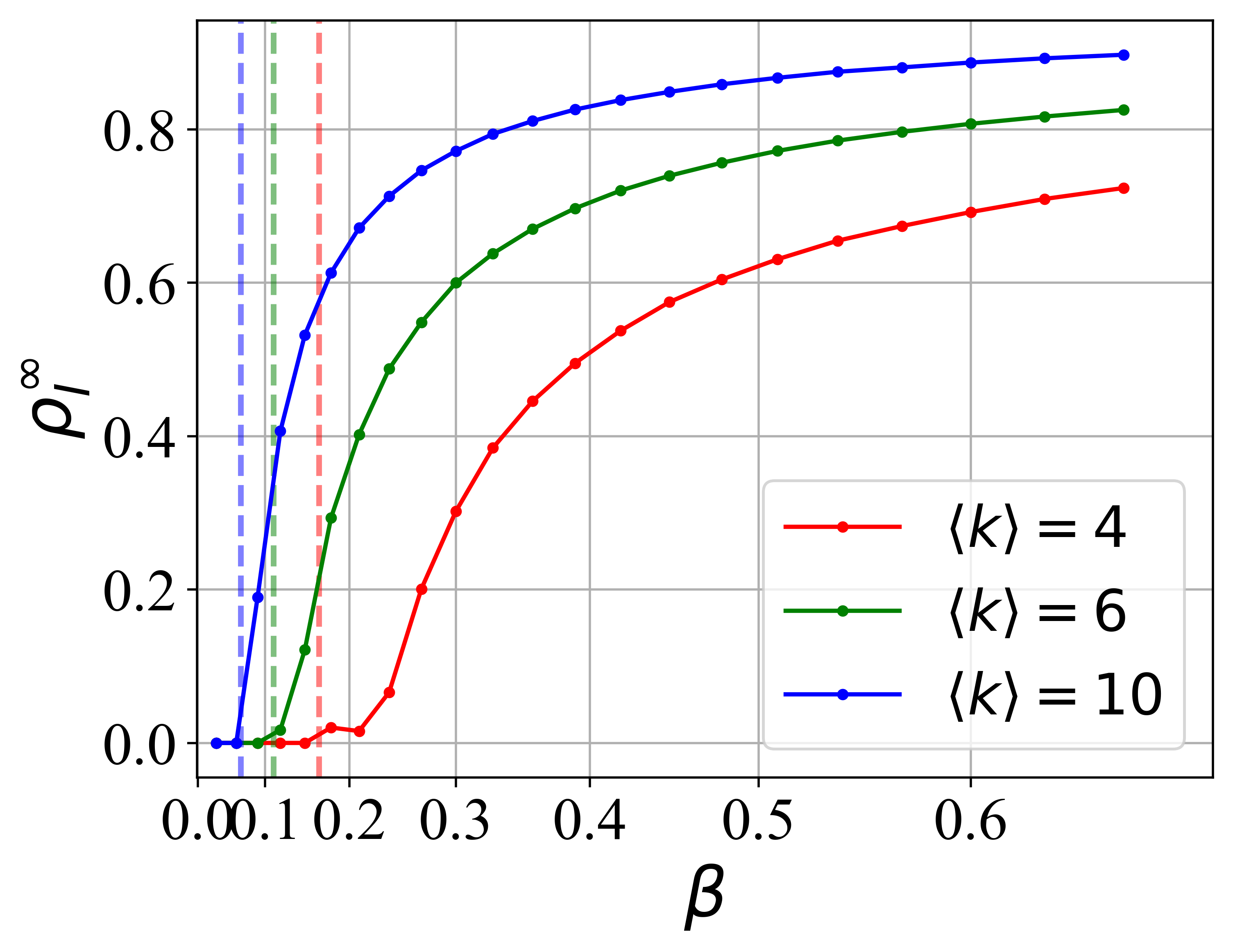}
    }
    \subfigure[Summary]{
        \includegraphics[width=0.45\linewidth]{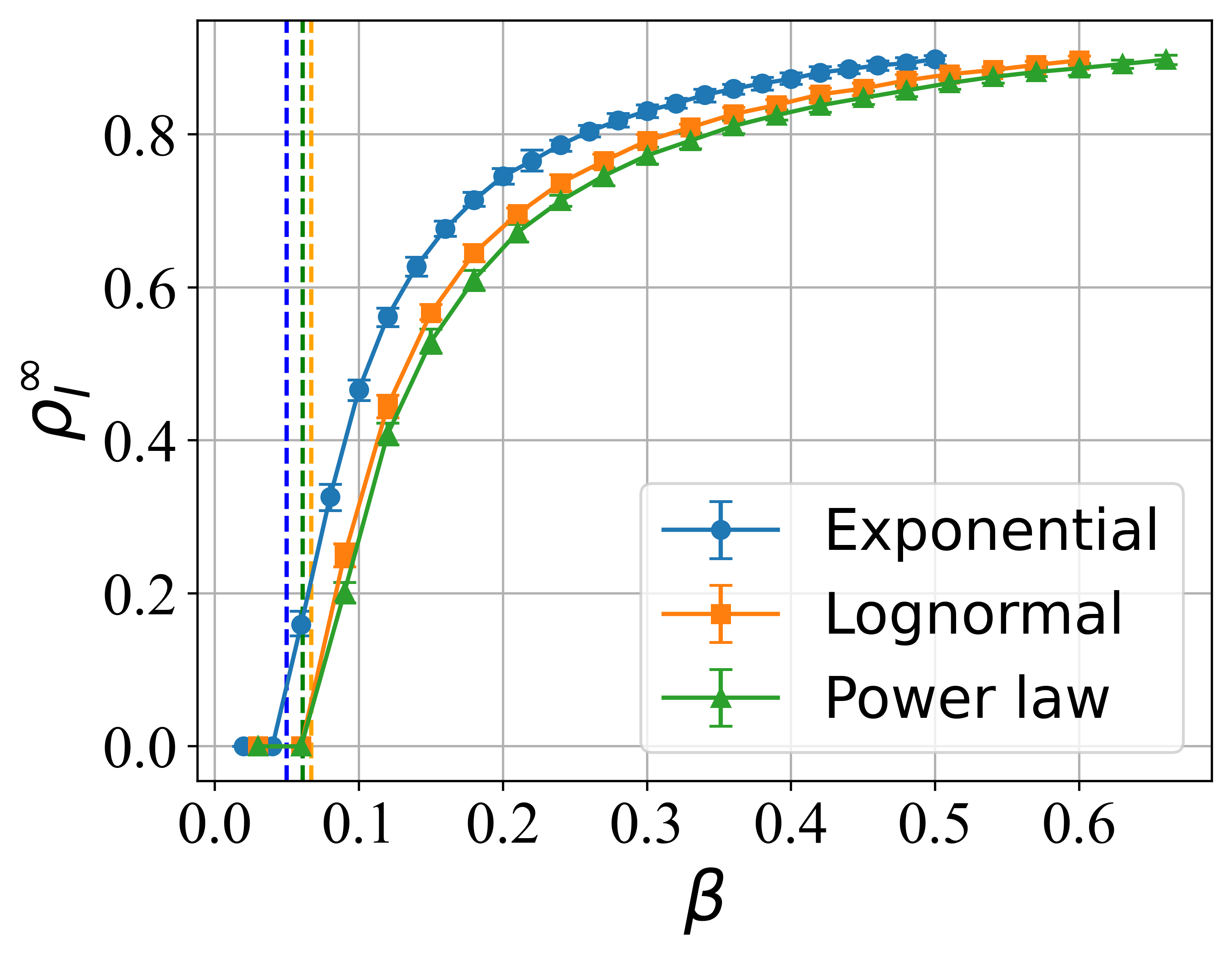}
    }
    \captionof{figure}{
        \textbf{The steady-state infection density $\rho_I^\infty$ as a function of the transmission rate $\beta$ for three different recovering waiting time distributions.}  
        Subfigure (a) shows results for an exponential distribution ($\mu = 0.5$, with 25 values of $\beta$ ranging from 0.02 to 0.5), subfigure (b) for a lognormal distribution ($\mu = 0, \sigma = 1$, with 20 values of $\beta$ ranging from 0.03 to 0.6), subfigure (c) for a power law distribution ($\lambda = 4, t_0 = 1$, with 22 values of $\beta$ ranging from 0.03 to 0.66), and subfigure (d) presents a summary of the three distributions for the case of average degree $\langle k \rangle = 10$. Each subfigure includes results for regular graphs with degrees $k = 4$, $k = 6$, and $k = 10$. The solid lines denote average values over 50 independent simulations, and the vertical dashed lines indicate the theoretical epidemic thresholds, which are approximately 0.05, 0.067, and 0.061 for (a), (b), and (c) respectively. Subfigure (d) combines the results from subfigures (a), (b), and (c) for $\langle k \rangle = 10$ and includes 68\% error bars for each data point, with vertical dashed lines indicating the epidemic thresholds.
        The plots show that the infection density remains close to zero before reaching the theoretical threshold and then starts increasing as $\beta$ increases, with the growth rate first slow, then fast, and finally slow again. For each distribution, the steady-state infection density is higher for graphs with a larger average degree. In subfigure (d), the error bars at each data point are short, never exceeding 0.04 in length. The error bars are longest during the fastest growth phase and shortest when the infection density growth slows down.
    } 
    
    \label{fig:rhoinfty-beta}
\end{figure*}

As shown in {Fig.~\ref{fig:rhoinfty-beta}}(a)-(c), the infection density increases with $\beta $ in all three cases, and networks with higher degrees result in higher infection densities, which is consistent with theoretical expectations. Near the epidemic threshold, we observe a noticeable lag in the onset of sustained infections, with the simulation results showing a delayed threshold compared to the theoretical predictions. This lag becomes more pronounced in lower-degree networks. The SIS model inherently possesses an absorbing state where the number of infections drops to zero. When the gap between the steady state and the absorbing state is small, random fluctuations can push the system into the absorbing state, causing the infection density to initially appear lower in numerical simulations. As the gap widens, the system more closely approaches the theoretical results. Furthermore, in higher-degree networks, the increased number of neighboring nodes reduces the variance in the average number of infections surrounding each node, making the system more resilient to random fluctuations. This is evident in the figure, where the delay effect becomes less pronounced as the degree increases.

Examining {Fig.~\ref{fig:rhoinfty-beta}}(d), we observe that overall fluctuations remain minimal, with the only notable deviations appearing in the regions where the steady-state density grows most rapidly. Even in these regions, the length of the error bars does not exceed 0.04. This result further confirms the accuracy and stability of our simulation methodology, demonstrating its ability to produce precise and reliable steady-state infection densities in nearly all cases.

\subsection{Distribution of infected time $T(\infty )$ at steady state}
\label{sec:4.3}

To simulate the PDF of the infected time $T(\infty )$ at the steady state for infected nodes, we selected several sets of parameters for the three recovery waiting time distributions, fixing other parameters for each run. The infected time at the end of the simulation was taken as an approximation for $T(\infty )$. We then applied the KDE method described in Section~\ref{sec:2.1} to the simulation data, yielding the numerical results for the PDF in {Fig.~\ref{fig:PDF}}.

\begin{figure*}
    \centering
    \subfigure[Exponential]{
        \includegraphics[width=0.45\linewidth]{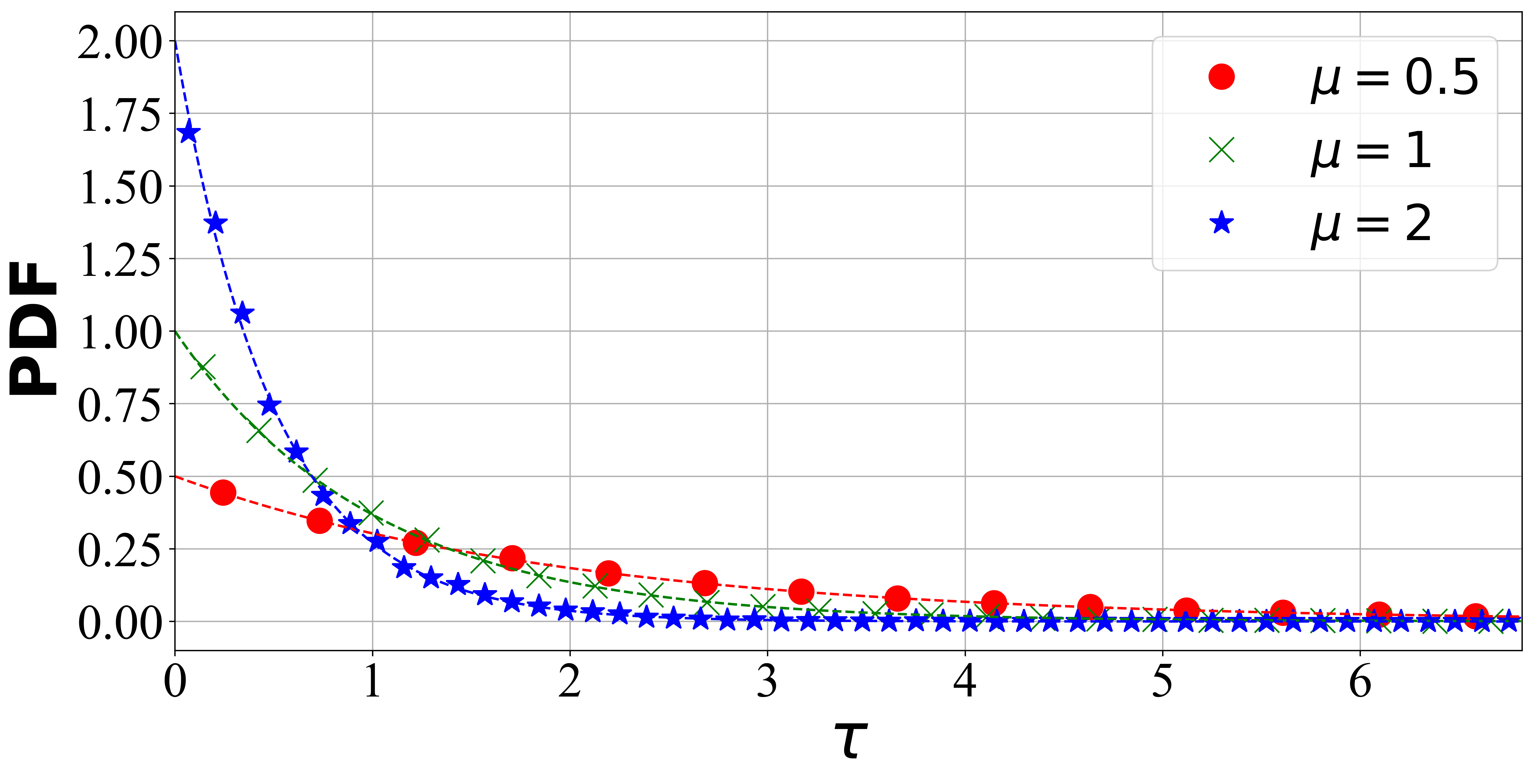}
    }
    \subfigure[Lognormal]{
        \includegraphics[width=0.45\linewidth]{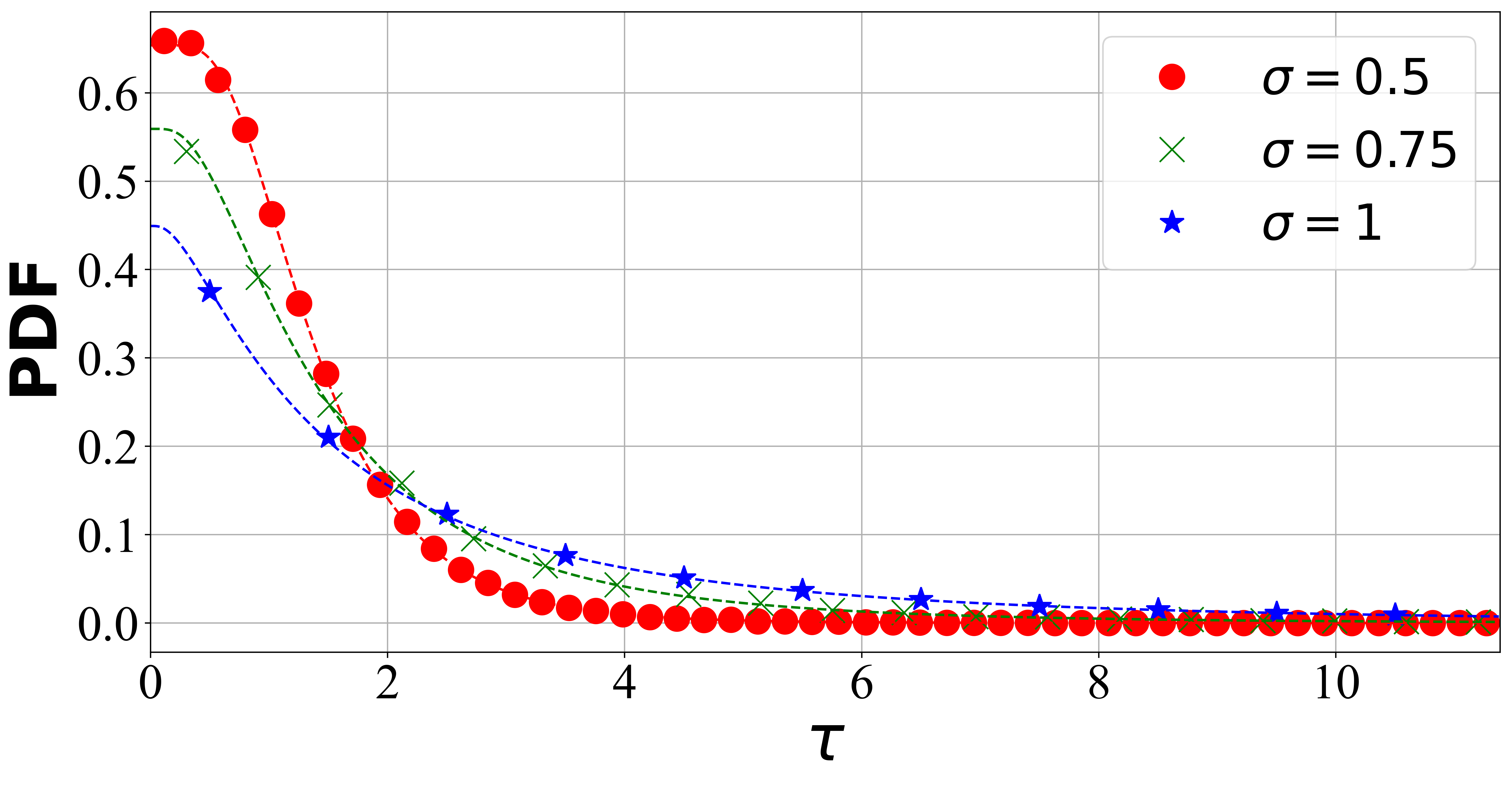}
    }
    \subfigure[Power law]{
        \includegraphics[width=0.45\linewidth]{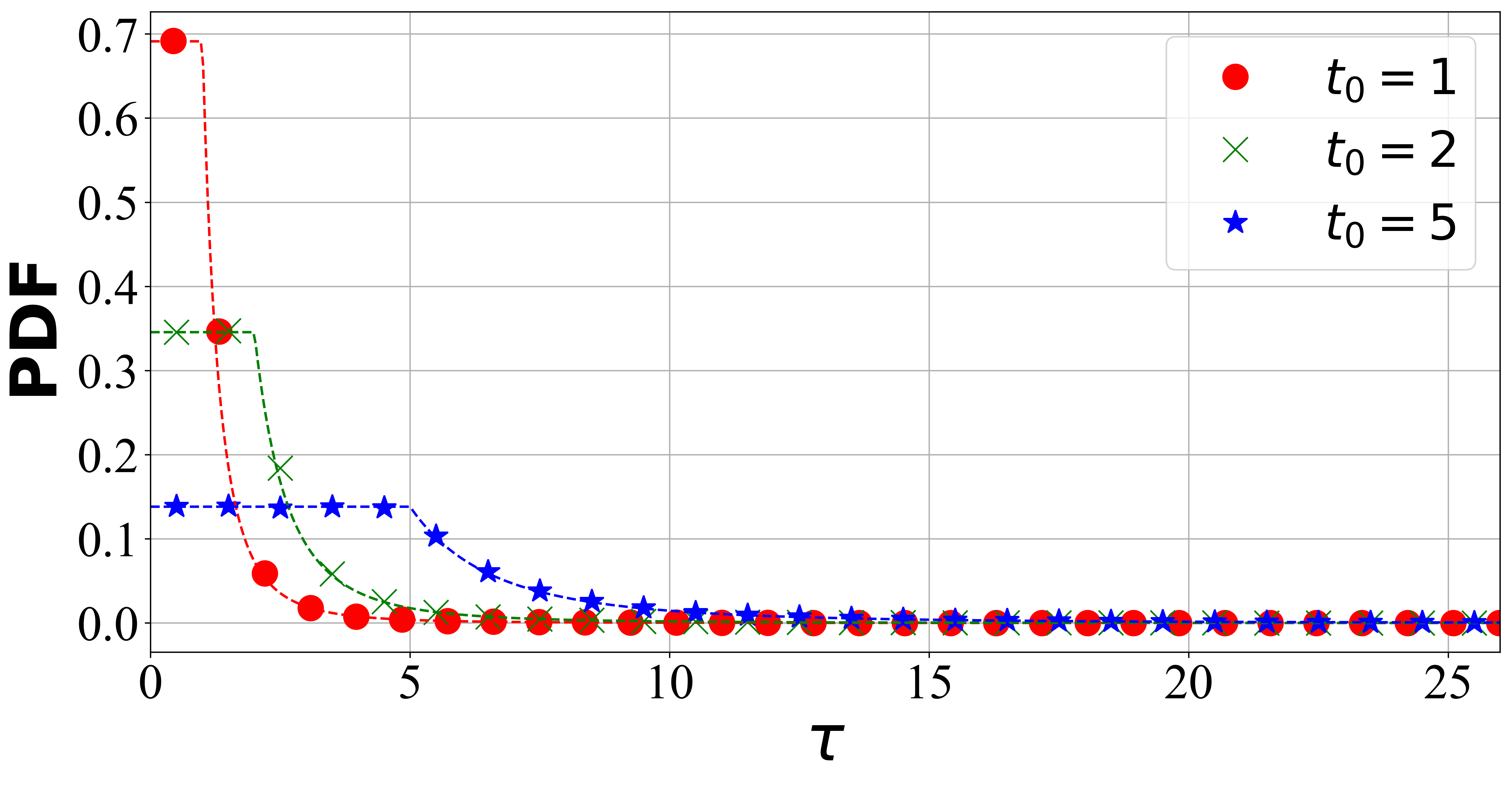}
    }
    \captionof{figure}{
    \textbf{The PDF of the infected time $T(\infty)$ for infected nodes at the steady state of the epidemic.}  
    Subfigure (a) shows the results for an exponential distribution with $\mu = 0.5$, $\mu = 1$, and $\mu = 2$ for $\beta = 0.1$, $\beta = 0.15$, and $\beta = 0.25$, respectively. Subfigure (b) displays results for a lognormal distribution with $\mu = 0.3$ and $\sigma = 0.5$, $\sigma = 0.75$, and $\sigma = 1$ for $\beta = 1$. Subfigure (c) presents the power law distribution with $\lambda = 4.24$ and $t_0 = 1$, $t_0 = 2$, and $t_0 = 5$ for $\beta = 1$.  
    In all subfigures, the solid lines represent the theoretical probability density function (PDF), while the individual data points represent the results of numerical simulations averaged over 50 independent runs on a degree-10 regular graph.  
    The figure shows that for the exponential distribution, the PDF resembles a typical exponential decay, starting at $1/\mu$ and monotonically approaching zero. For the lognormal distribution, the PDF forms a bell-shaped curve with a peak at the mean value and decays symmetrically. In the power law case, the PDF is a horizontal line connecting a monotonically decaying curve. The simulation data points closely match the theoretical PDFs in all cases.
} 
\label{fig:PDF}
\end{figure*}

As seen in {Fig.~\ref{fig:PDF}}, the numerical results closely align with the theoretical curves across all three distributions. To quantify the degree of fit, we computed the Kullback-Leibler (KL) divergence between the numerical and theoretical results. {Table~\ref{table:KL_Divergence}} lists the KL divergence values (rounded to five decimal places) for each parameter set. The relatively low KL divergence values indicate a strong agreement between the numerical simulations and theoretical predictions, providing substantial support for our theoretical framework.

\begin{table*}
  \mytablestyle
  \centering
  \captionof{table}{KL Divergence between Numerical and Theoretical Solutions (rounded to 5 decimal places)}
  \label{table:KL_Divergence}
  \resizebox{\textwidth}{!}{%
  \begin{tabular}{ccc|ccc|ccc}
    \toprule[2pt]
    \rowcolor{cyan!30!white}
    \hline
    \multicolumn{3}{c|}{\textbf{Exponential Distribution}} & \multicolumn{3}{c|}{\textbf{Lognormal Distribution ($\beta=1,\mu=0.3$)}} & \multicolumn{3}{c}{\textbf{Power law Distribution ($\beta=1,\lambda=4.24$)}} \\ \hline
    \rowcolor{lightorange!30!white}
    \multicolumn{1}{c|}{$\beta=0.1,\mu=0.5$} & \multicolumn{1}{c|}{$\beta=0.15,\mu=1$} & $\beta=0.25,\mu=2$ & \multicolumn{1}{c|}{$\sigma=0.5$} & \multicolumn{1}{c|}{$\sigma=0.75$} & $\sigma=1$ & \multicolumn{1}{c|}{$t_0=1$} & \multicolumn{1}{c|}{$t_0=2$} & $t_0=5$ \\ \hline
    \multicolumn{1}{c|}{0.00053} & \multicolumn{1}{c|}{0.00068} & 0.00193 & \multicolumn{1}{c|}{0.00022} & \multicolumn{1}{c|}{0.00091} & 0.00044 & \multicolumn{1}{c|}{0.00470} & \multicolumn{1}{c|}{0.00083} & 0.00177 \\ \hline
    \bottomrule[2pt]
  \end{tabular}%
  }
\end{table*}

Observing the cases in the figure with three different recovery waiting time distributions, we can see that the PDF of the infected time $T(\infty )$ at the steady state for infected individuals is non-increasing across all three types of distributions. This result can be easily explained by {Theorem~\ref{tho:PDF-T}}, which states that $f_{T(\infty )}(\tau )$ is proportional to $\int _{\tau}^{\infty }w(t) \, \mathrm{d}t$. Thus, the non-increasing nature of $f_{T(\infty )}(\tau )$ directly follows from the non-negativity of the PDF.

Additionally, we observe that when the recovery waiting time follows a power law distribution, the distribution of the steady-state infected time $T(\infty )$ is uniform over the interval $\tau \in [0, t_{0}]$, while this uniform region does not appear in other cases. This result is expected: since the power law distribution has a zero probability of taking values within the interval $[0, t_{0}]$, the recovery waiting time of any infected individual will never be less than $t_{0}$. For infected individuals whose infection time has not yet reached $t_{0}$, we cannot distinguish their eventual recovery waiting times based solely on their current infected time. In other words, these individuals are equivalent in terms of their infection time.

In fact, by using {Theorem~\ref{tho:PDF-T}}, we can theoretically prove that, in general, for any interval $E := [m, M]$ with $0 < m < M$, if the PDF of the recovery waiting time $w(t)$ is zero over this interval $E$, meaning that $E$ is an impossible value range for the recovery waiting time, then the distribution of the steady-state infected time $T(\infty )$, conditioned on $T(\infty ) \in E$, will follow a uniform distribution over the interval $E$.

\subsection{Expected infected time $T(\infty )$ at steady state}
\label{sec:4.4}

We analyze the relationship between the mathematical expectation of the infected time $T(\infty )$ at the steady state and the parameters of the recovery waiting time distribution. For this, we conducted numerical simulations using various sets of distribution parameters, keeping other parameters fixed. The infected time at the end of the simulation was used as an approximation of $T(\infty )$, and the mean of these values was calculated as the numerical result for the mathematical expectation, as illustrated in {Fig.~\ref{fig:ET}}.

\begin{figure*}
    \centering
    \subfigure[Lognormal]{
        \includegraphics[width=0.45\linewidth]{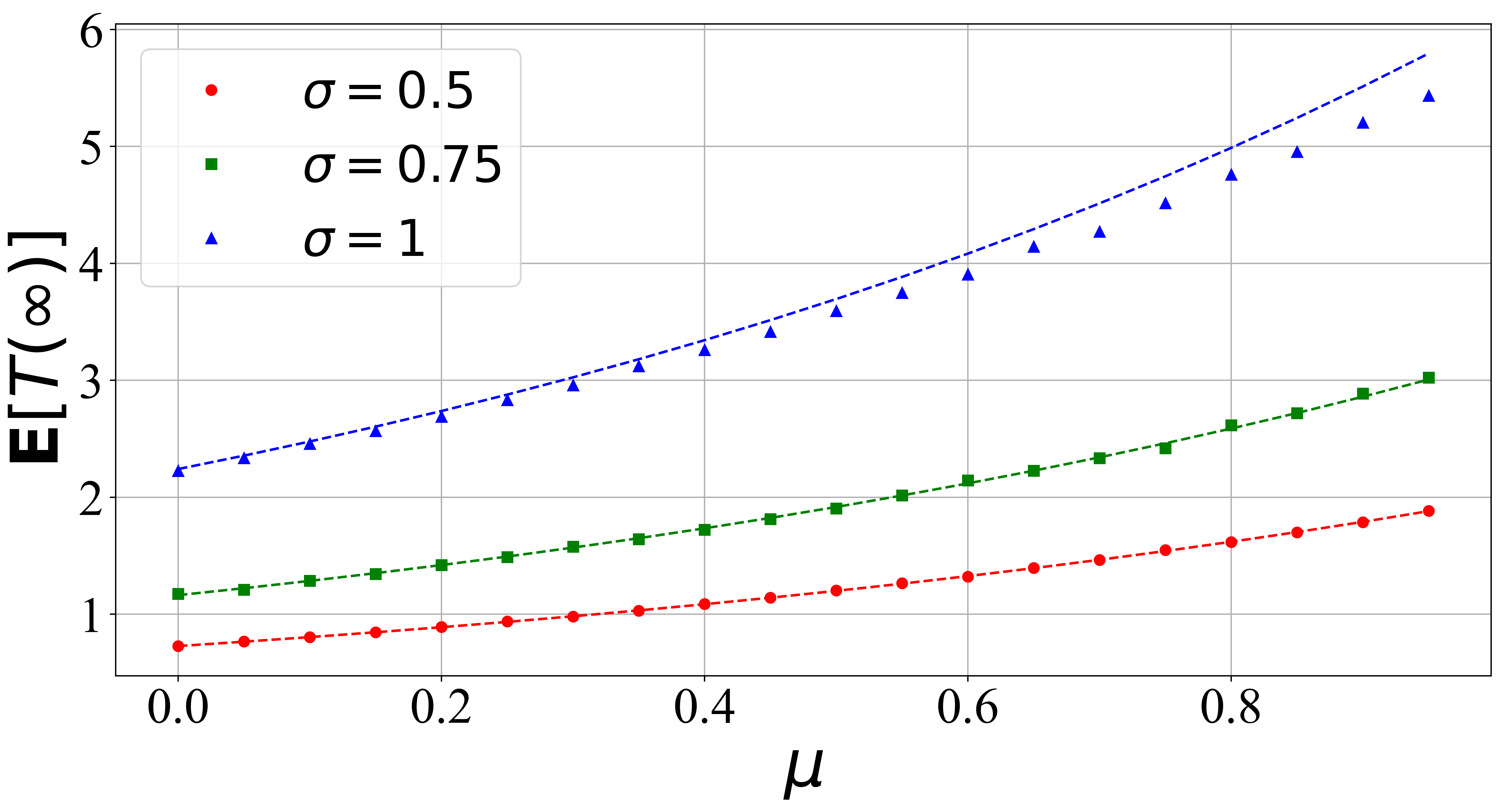}
    }
    \subfigure[Power law]{
        \includegraphics[width=0.45\linewidth]{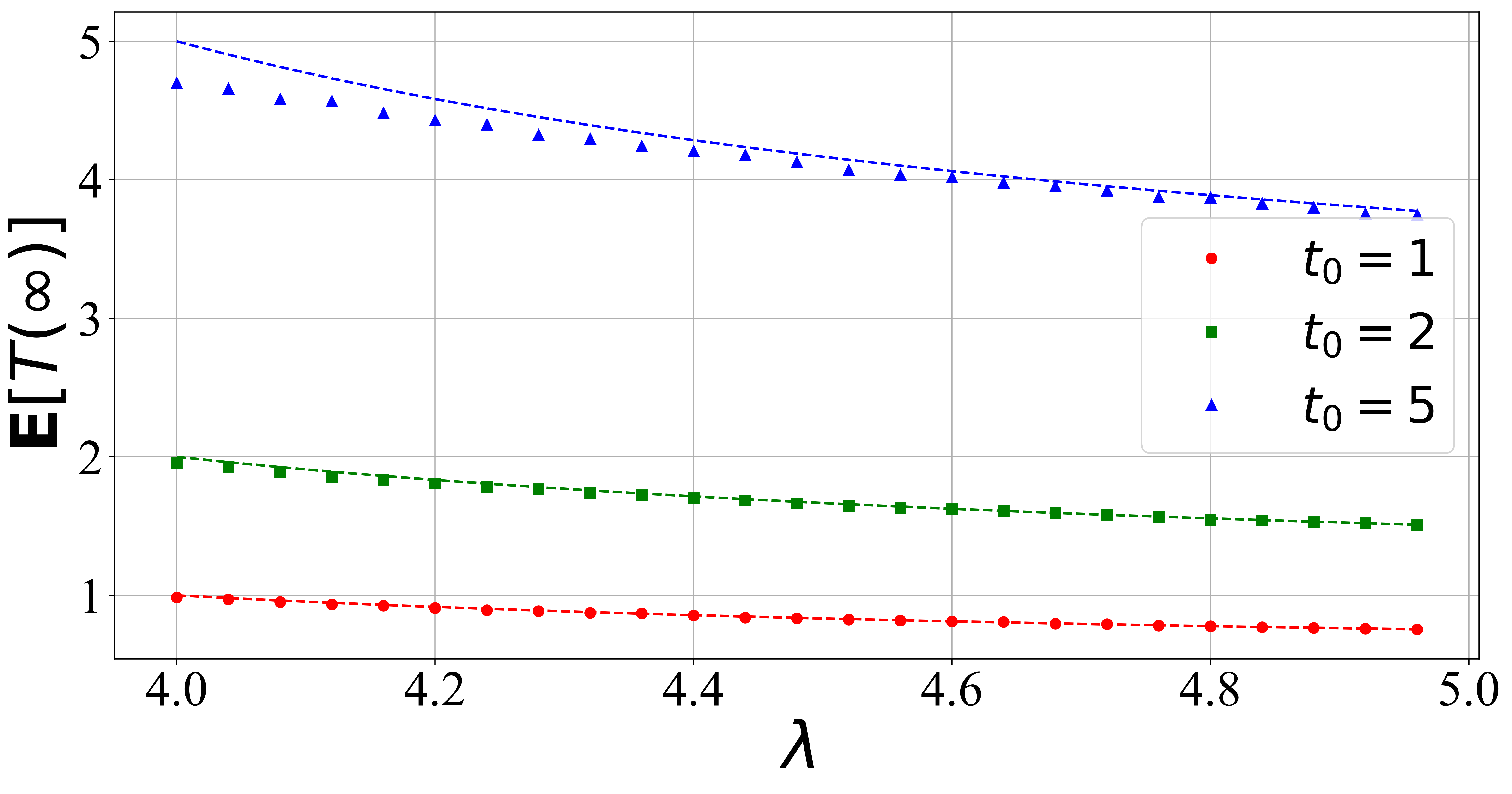}
    }
    \captionof{figure}{
    \textbf{The mathematical expectation of the infected time $T(\infty)$ for infected nodes at the steady state as a function of the recovering waiting time distribution parameters.}  
    Subfigure (a) corresponds to the lognormal distribution with $\sigma = 0.5$, $0.75$, and $1$, while the parameter $\mu$ varies from $0$ to $0.95$ in $20$ equally spaced intervals. Subfigure (b) denotes the power law distribution with $t_0 = 1$, $2$, and $5$, while the parameter $\lambda$ ranges from $4$ to $4.96$ in $25$ equally spaced intervals.  
    In both subfigures, simulations are conducted on a degree-10 regular graph, with $\beta = 1$. Each data point indicates the average result over 50 independent runs. The dashed lines represent the theoretical expectations, and the individual markers indicate the numerical simulation results.  
    In the lognormal distribution case, the expectation increases with $\mu$. For the power law distribution, the expectation decreases with increasing $\lambda$. The simulation results show some minor deviations from the theoretical values. In the lognormal case, discrepancies arise when both $\mu$ and $\sigma$ are large, while in the power law case, deviations occur when $t_0$ is large and $\lambda$ is small. Other data points closely match the theoretical expectations.
} 
    \label{fig:ET}
\end{figure*}

From the figures, we observe that the numerical results align well with the theoretical expectation curves when the distribution parameters are in moderate ranges. However, for more extreme parameter values, there is a slight deviation from the theoretical curves. Upon analyzing the parameter settings for these cases, calculations indicate that the epidemic's steady state approaches the absorbing state. This suggests that the observed deviations stem from the reasons discussed in Section~\ref{sec:4.2}: When the gap between the steady-state of propagation and the absorbing state is narrow, a small random perturbation can cause the system to transition into the absorbing state. In this state, the infected nodes are lost from the system, leading to an expected infected time of zero for those nodes. Consequently, these zero-valued expectations from the absorbing state mix with the expectations from the propagation steady state, resulting in a lowered overall expected infected time compared to the theoretical value. In contrast, when the parameters are such that the gap between the propagation steady state and the absorbing state is wider, the system is less likely to deviate into the absorbing state due to minor perturbations. As a result, the numerical simulations closely match the theoretical predictions under these conditions.

Additionally, as shown in the figure, the expected infected time at the steady state exhibits monotonicity with respect to specific parameters. For the lognormal distribution, the expected infected time in the propagation steady state increases monotonically with $\mu $ and $\sigma $. For the power law distribution, the expected infected time in the propagation steady state decreases monotonically with $\lambda $ and increases monotonically with $t_{0}$. These trends are consistent with the theoretical results.


\section{Conclusions and future work}
\label{sec:5}

In this paper, we introduce the grp-SIS model as an extension of the classic SIS framework, incorporating an arbitrary recovery time distribution. We derive the corresponding mean-field equations, analyze the system's steady-state properties, and validate our results through numerical simulations. Our findings indicate that the recovery time distribution plays a crucial role in shaping disease prevalence, with non-exponential distributions leading to slower epidemic dynamics and a broader infection duration distribution.

This study holds both theoretical significance and practical relevance. First, understanding the distribution of infection times provides critical insights into long-term disease spread trends, enabling public health authorities to develop more effective control strategies. By analyzing infection time characteristics, we can better understand transmission mechanisms, identify high-risk populations, and optimize resource allocation. Furthermore, our findings offer theoretical support beyond epidemiology, particularly in computer science, where similar models can be applied to simulate and predict the spread of network-based viruses.

In terms of real-world applications, our results have significant value in multiple contexts. In public health, understanding the distribution of infection times enables more accurate predictions of epidemic trends, providing a scientific basis for disease prevention and control measures. In medical resource allocation, knowledge of infection time distributions helps optimize resource planning, improving the efficiency and quality of healthcare services. Additionally, during vaccine development and deployment, infection time data can inform vaccine efficacy assessments and guide immunization strategies. Lastly, in cybersecurity, our modelling approach can be leveraged to simulate and predict the spread of malware and other cyber threats, offering theoretical support for network security defense strategies.

Several potential enhancements to the model can be explored in future work. First, while we introduced a general recovery process, the infection process remains Poissonian. Extending the model to incorporate arbitrary infection processes could provide a more comprehensive understanding of disease spreading under heterogeneous conditions. Second, in the numerical simulations presented in Section~\ref{sec:4.2}, deviations from the theoretical results were observed due to the proximity between the steady state and the absorbing state. To mitigate this issue, we suggest employing the quasistationary method \cite{de2005simulate}, which offers a more effective approach to simulating the system while avoiding absorption. Third, our model assumes a homogeneous network, which may limit its applicability to networks with significant degree variations. Addressing this limitation requires analyzing nodes with different degrees separately to extend the grp-SIS model to heterogeneous network systems. By refining both the infection and recovery processes, the model could achieve greater accuracy in predicting disease spread in real-world networks.


\begin{thebibliography}{3}
\expandafter\ifx\csname natexlab\endcsname\relax\def\natexlab#1{#1}\fi
\providecommand{\url}[1]{\texttt{#1}}
\providecommand{\href}[2]{#2}
\providecommand{\path}[1]{#1}
\providecommand{\DOIprefix}{doi:}
\providecommand{\ArXivprefix}{arXiv:}
\providecommand{\URLprefix}{URL: }
\providecommand{\Pubmedprefix}{pmid:}
\providecommand{\doi}[1]{\href{http://dx.doi.org/#1}{\path{#1}}}
\providecommand{\Pubmed}[1]{\href{pmid:#1}{\path{#1}}}
\providecommand{\bibinfo}[2]{#2}
\ifx\xfnm\relax \def\xfnm[#1]{\unskip,\space#1}\fi
\bibitem[{Fortunato(2010)}]{Fortunato2010}
\bibinfo{author}{Fortunato, S.}, \bibinfo{year}{2010}.
\newblock \bibinfo{title}{Community detection in graphs}.
\newblock \bibinfo{journal}{Phys. Rep.-Rev. Sec. Phys. Lett.} \bibinfo{volume}{486}, \bibinfo{pages}{75--174}.
\bibitem[{Newman and Girvan(2004)}]{NewmanGirvan2004}
\bibinfo{author}{Newman, M.E.J.}, \bibinfo{author}{Girvan, M.}, \bibinfo{year}{2004}.
\newblock \bibinfo{title}{Finding and evaluating community structure in networks}.
\newblock \bibinfo{journal}{Phys. Rev. E.} \bibinfo{volume}{69}, \bibinfo{pages}{026113}.
\bibitem[{Vehlow et~al.(2013)Vehlow, Reinhardt and Weiskopf}]{Vehlowetal2013}
\bibinfo{author}{Vehlow, C.}, \bibinfo{author}{Reinhardt, T.}, \bibinfo{author}{Weiskopf, D.}, \bibinfo{year}{2013}.
\newblock \bibinfo{title}{Visualizing fuzzy overlapping communities in networks}.
\newblock \bibinfo{journal}{IEEE Trans. Vis. Comput. Graph.} \bibinfo{volume}{19}, \bibinfo{pages}{2486--2495}.

\end{thebibliography}


\begin{thebibliography}{99}  
\bibitem{chen2024siqrs} Chen J, Xia C, Perc M. The SIQRS propagation model with quarantine on simplicial complexes[J]. IEEE Transactions on Computational Social Systems, 2024, 11(3): 4267-4278.

\bibitem{li2024urban} Li Z, Xiong G, Lv Y, et al. An Urban Trajectory Data-Driven Approach for COVID-19 Simulation[J]. IEEE Transactions on Computational Social Systems, 2024, 11(3): 4290-4299.

\bibitem{cui2024impact} Cui Z, Cai M, Zhu Z, et al. Impact of Indoor Mobility Behavior on the Respiratory Infectious Diseases Transmission Trends[J]. IEEE Transactions on Computational Social Systems, 2024.

\bibitem{hajarathaiah2024node} Hajarathaiah K, Enduri M K, Anamalamudi S, et al. Node significance analysis in complex networks using machine learning and centrality measures[J]. IEEE Access, 2024, 12: 10186-10201.

\bibitem{cisneros2021multigroup} Cisneros-Velarde P, Bullo F. Multigroup SIS epidemics with simplicial and higher order interactions[J]. IEEE Transactions on Control of Network Systems, 2021, 9(2): 695-705.

\bibitem{de2024certain} de Jesús Cabral-García G, Villa-Morales J. Certain aspects of the SIS stochastic epidemic model[J]. Applied Mathematical Modelling, 2024, 128: 272-286.

\bibitem{jafarizadeh2025optimal} Jafarizadeh S. Optimal resource allocation for rapid convergence to stable healthy state in epidemic spreading models[J]. Appl. Math. Model, 2025, 138.

\bibitem{papageorgiou2023stochastic} Papageorgiou V E, Tsaklidis G. A stochastic SIRD model with imperfect immunity for the evaluation of epidemics[J]. Applied Mathematical Modelling, 2023, 124: 768-790.

\bibitem{zhang2021threshold} Zhang X B, Zhang X H. The threshold of a deterministic and a stochastic SIQS epidemic model with varying total population size[J]. Applied mathematical modelling, 2021, 91: 749-767.

\bibitem{sangala2024analyzing} Sangala A, Hazarika A, Neelisetti R K, et al. Analyzing Computer Virus Propagation: Integrating Monte Carlo Simulations with Delay Differential Equation Models[C]//2024 International Conference on Emerging Techniques in Computational Intelligence (ICETCI). IEEE, 2024: 345-352.

\bibitem{zhao2020minimum} Zhao D, Xiao G, Wang Z, et al. Minimum dominating set of multiplex networks: Definition, application, and identification[J]. IEEE Transactions on Systems, Man, and Cybernetics: Systems, 2020, 51(12): 7823-7837.

\bibitem{an2024coupled} An X, Zhang C, Hou L, et al. Coupled Epidemic-Information Propagation With Stranding Mechanism on Multiplex Metapopulation Networks[J]. IEEE Transactions on Computational Social Systems, 2024.

\bibitem{li2022network} Li Q, Chen H, Li Y, et al. Network spreading among areas: A dynamical complex network modeling approach[J]. Chaos: An Interdisciplinary Journal of Nonlinear Science, 2022, 32(10).

\bibitem{feng2024information} Feng M, Zeng Z, Li Q, et al. Information dynamics in evolving networks based on the birth-death process: Random drift and natural selection perspective[J]. IEEE transactions on systems, man, and cybernetics: systems, 2024.

\bibitem{benson2007recurrence} Benson D A, Schumer R, Meerschaert M M. Recurrence of extreme events with power‐law interarrival times[J]. Geophysical Research Letters, 2007, 34(16).

\bibitem{horvath2014spreading} Horváth D X, Kertész J. Spreading dynamics on networks: the role of burstiness, topology and non-stationarity[J]. New Journal of Physics, 2014, 16(7): 073037.

\bibitem{jo2014analytically} Jo H H, Perotti J I, Kaski K, et al. Analytically solvable model of spreading dynamics with non-Poissonian processes[J]. Physical Review X, 2014, 4(1): 011041.

\bibitem{masuda2020small} Masuda N, Holme P. Small inter-event times govern epidemic spreading on networks[J]. Physical Review Research, 2020, 2(2): 023163.

\bibitem{min2011spreading} Min B, Goh K I, Vazquez A. Spreading dynamics following bursty human activity patterns[J]. Physical Review E—Statistical, Nonlinear, and Soft Matter Physics, 2011, 83(3): 036102.

\bibitem{vazquez2006modeling} Vázquez A, Oliveira J G, Dezsö Z, et al. Modeling bursts and heavy tails in human dynamics[J]. Physical Review E—Statistical, Nonlinear, and Soft Matter Physics, 2006, 73(3): 036127.

\bibitem{cattuto2010dynamics} Cattuto C, Van den Broeck W, Barrat A, et al. Dynamics of person-to-person interactions from distributed RFID sensor networks[J]. PloS one, 2010, 5(7): e11596.

\bibitem{iribarren2009impact} Iribarren J L, Moro E. Impact of human activity patterns on the dynamics of information diffusion[J]. Physical review letters, 2009, 103(3): 038702.

\bibitem{aliee2020estimating} Aliee M, Rock K S, Keeling M J. Estimating the distribution of time to extinction of infectious diseases in mean-field approaches[J]. Journal of the Royal Society Interface, 2020, 17(173): 20200540.

\bibitem{zhu2019nonlinear} Zhu L, Guan G, Li Y. Nonlinear dynamical analysis and control strategies of a network-based SIS epidemic model with time delay[J]. Applied Mathematical Modelling, 2019, 70: 512-531.

\bibitem{hathcock2022asymptotic} Hathcock D, Strogatz S H. Asymptotic absorption-time distributions in extinction-prone Markov processes[J]. Physical Review Letters, 2022, 128(21): 218301.

\bibitem{cator2013susceptible} Cator E, Van de Bovenkamp R, Van Mieghem P. Susceptible-infected-susceptible epidemics on networks with general infection and cure times[J]. Physical Review E—Statistical, Nonlinear, and Soft Matter Physics, 2013, 87(6): 062816.

\bibitem{starnini2017equivalence} Starnini M, Gleeson J P, Boguñá M. Equivalence between non-Markovian and Markovian dynamics in epidemic spreading processes[J]. Physical review letters, 2017, 118(12): 128301.

\bibitem{feng2019equivalence} Feng M, Cai S M, Tang M, et al. Equivalence and its invalidation between non-Markovian and Markovian spreading dynamics on complex networks[J]. Nature communications, 2019, 10(1): 3748.

\bibitem{han2023non} Han L, Lin Z, Yin Q, et al. Non-Markovian epidemic spreading on temporal networks[J]. Chaos, Solitons \& Fractals, 2023, 173: 113664.

\bibitem{zeng2025power} Zeng Z, Feng M, Liu P, et al. Complex Network Modeling With Power-Law Activating Patterns and Its Evolutionary Dynamics[J]. IEEE Transactions on Systems, Man, and Cybernetics: Systems, 2025.

\bibitem{kermack1932contributions} Kermack W O, McKendrick A G. Contributions to the mathematical theory of epidemics. II.—The problem of endemicity[J]. Proceedings of the Royal Society of London. Series A, containing papers of a mathematical and physical character, 1932, 138(834): 55-83.

\bibitem{daley1999epidemic} Daley D J, Gani J M. Epidemic modelling: an introduction[M]. Cambridge University Press, 1999.

\bibitem{gandica2017stationarity} Gandica Y, Carvalho J, Sampaio dos Aidos F, et al. Stationarity of the inter-event power-law distributions[J]. PloS one, 2017, 12(3): e0174509.

\bibitem{barabasi2005origin} Barabasi A L. The origin of bursts and heavy tails in human dynamics[J]. Nature, 2005, 435(7039): 207-211.

\bibitem{vazquez2007impact} Vazquez A, Racz B, Lukacs A, et al. Impact of non-Poissonian activity patterns on spreading processes[J]. Physical review letters, 2007, 98(15): 158702.

\bibitem{jo2012optimized} Jo H H, Moon E, Kaski K. Optimized reduction of uncertainty in bursty human dynamics[J]. Physical Review E—Statistical, Nonlinear, and Soft Matter Physics, 2012, 85(1): 016102.

\bibitem{oliveira2005darwin} Oliveira J G, Barabási A L. Darwin and Einstein correspondence patterns[J]. Nature, 2005, 437(7063): 1251-1251.

\bibitem{gonccalves2008human} Gonçalves B, Ramasco J J. Human dynamics revealed through Web analytics[J]. Physical Review E—Statistical, Nonlinear, and Soft Matter Physics, 2008, 78(2): 026123.

\bibitem{candia2008uncovering} Candia J, González M C, Wang P, et al. Uncovering individual and collective human dynamics from mobile phone records[J]. Journal of physics A: mathematical and theoretical, 2008, 41(22): 224015.

\bibitem{radicchi2009human} Radicchi F. Human activity in the web[J]. Physical Review E—Statistical, Nonlinear, and Soft Matter Physics, 2009, 80(2): 026118.

\bibitem{chaintreau2007impact} Chaintreau A, Hui P, Crowcroft J, et al. Impact of human mobility on opportunistic forwarding algorithms[J]. IEEE Transactions on Mobile Computing, 2007, 6(6): 606-620.

\bibitem{crow1987lognormal} Crow Edwin L, Kunio S. Lognormal distributions[J]. Marcel Dekker New York, 1987, 9: 10.

\bibitem{li2012susceptible} Li C, van de Bovenkamp R, Van Mieghem P. Susceptible-infected-susceptible model: A comparison of N-intertwined and heterogeneous mean-field approximations[J]. Physical Review E—Statistical, Nonlinear, and Soft Matter Physics, 2012, 86(2): 026116.

\bibitem{de2005simulate} de Oliveira M M, Dickman R. How to simulate the quasistationary state[J]. Physical Review E—Statistical, Nonlinear, and Soft Matter Physics, 2005, 71(1): 016129.

\end{thebibliography}
\end{document}